\newcommand{\braket}[2]{\langle #1|#2\rangle}
\newcommand{\bra}[1]{\langle {#1} |}
\newcommand{\ket}[1]{| {#1} \rangle}
\newcommand{\wt}[1]{\widetilde{#1}}
\newcolumntype{M}{>{$\vcenter\bgroup\hbox\bgroup}c<{\egroup\egroup$}}
\newcommand{\ve}{\boldsymbol}
\def\>{\rangle}
\def\<{\langle}
\def\I{ \mathbbm{1} }
\DeclareMathOperator{\tr}{tr}
\newtheorem{lem}{Lemma}
\definecolor{nblue}{rgb}{0.3,0.3,1.0}
\definecolor{ngreen}{rgb}{0.2,0.7,0.2}
\definecolor{nred}{rgb}{0.9,0.1,0}
\definecolor{norange}{rgb}{0.8,0.5,0}
\def\>{\rangle}
\def\<{\langle}
\def\I{ \mathbbm{1} }
\def\dim{ \mathrm{dim}}
\renewcommand\sout{\bgroup \color{red} \ULdepth=-.5ex \ULset}
\begin{document}

\title{{Frustration-free Hamiltonians supporting Majorana zero edge modes}}

\author{Sania Jevtic}\email{sania.jevtic@imperial.ac.uk}
\author{Ryan Barnett}
\affiliation{Department of Mathematics, Huxley Building, Imperial College, London SW7 2AZ, United Kingdom}

\begin{abstract}
A one-dimensional fermionic system, such as a superconducting wire, may host Majorana zero-energy edge modes (MZMs) at its edges when it is in the topological phase. MZMs provide a path to realising fault-tolerant quantum computation, and so are the focus of intense experimental and theoretical studies. However, given a Hamiltonian, determining whether MZMs exist is a daunting task as it relies on knowing the spectral properties of the Hamiltonian in the thermodynamic limit.
The Kitaev chain is a paradigmatic non-interacting model that supports MZMs and the Hamiltonian can be fully diagonalised.  However, for interacting models, the situation is far more complex.
Here we consider a different classification of models, namely, ones with frustration-free Hamiltonians. Within this class of models, interacting and non-interacting systems are treated on an equal footing, and we identify exactly which Hamiltonians can realise MZMs.
\end{abstract}

\date{\today}

\maketitle

\section{Introduction}

\begin{figure}[b]
\includegraphics[trim=0cm 2cm 0cm 3cm, clip,width=6in]{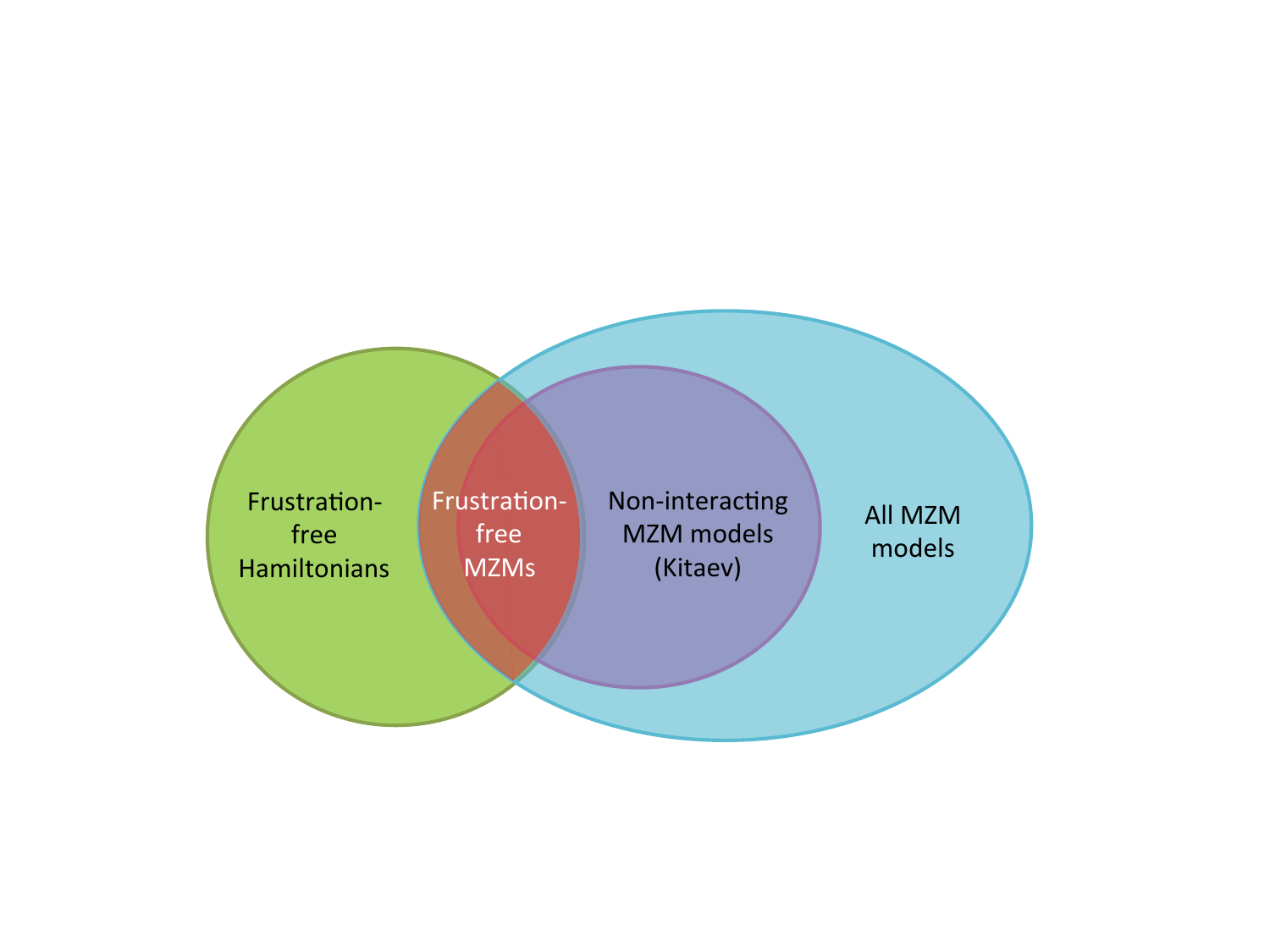}
\caption{Our goal is to find the set of frustration-free Majorana zero mode (MZM) models (the red region). We shall see that this set includes both interacting and non-interacting MZM models.}
\label{FF_venn}
\end{figure}

Majorana fermions were first conceived by Ettore Majorana as a real solution to the Dirac equation \cite{E_Majorana_1937}, and being real means that these fermions are their own anti-particles. In condensed matter systems, the Majorana fermion is a Hermitian fermionic operator. It is of particular importance when it commutes with the system's Hamiltonian and corresponds to a localised zero-energy mode. The reason for this is that the presence of such modes indicates that the fermionic system is in a topologically-ordered phase. The prototypical example of such a system is a theoretical model for a superconducting wire, the ``Kitaev chain'' \cite{Kitaev}, and it describes a line of non-interacting spinless (complex) fermions, where the Cooper pairs are bound via a $p$-wave pairing. Kitaev demonstrated that, for certain ranges of the Hamiltonian parameters, there is a phase where two gapless boundary modes exist and these correspond to Majorana fermions exponentially localised at opposite end of the chain. The modes persist even in the presence of symmetry-preserving local perturbations. We will call such topologically-protected Majorana fermions ``Majorana zero modes'' (MZMs).

The presence of two MZMs implies the Hamiltonian possesses a two-fold degenerate ground space, therefore this degeneracy is also robust against local noise and the ground states cannot be distinguished by local measurements. The ground space states can be used to encode quantum information in a fault-tolerant way.
Subsequent studies have demonstrated that, indeed, systems with MZMs form promising candidates for building a quantum memory \cite{Q_Memory}.  Furthermore, as MZMs display non-Abelian statistics upon braiding, a collection of Majorana wires could provide a realisation of a topological quantum computer. For a review of this topic see Refs.~\cite{alicea2012, SFN_2015}. 
A number of theoretical proposals,  including those, for instance, in Refs.~\cite{oreg2010,lutchyn2010,alicea2011,jiang2011}, have addressed the challenging task of
producing and controlling MZMs in systems ranging from cold atomic gases to solid state materials.
Signatures of MZMs have been reported in several recent experiments at the interface of topological insulators and superconductors \cite{Expt1_2012,Expt2_2012,Expt3_2012,Expt1_2013}, in ferromagnetic chains on a superconductor \cite{Expt_2014}, and in quantum spin liquids \cite{Expt_2016}.

The Kitaev model, though the subject of substantial research effort, is somewhat idealised in that it neglects interactions between the fermionic particles.  
The robustness of topological phases with respect to interactions is an important question, especially for practical purposes. In fact,  
in some cases, repulsive interactions may be beneficial as they have been shown to lead to a stabilisation of the topological phase~\cite{SASF_2011,SAR_2011,Hassler_Schuricht,Thomale_etal}.

In this manuscript, we analyse a particular class of interacting and non-interacting 1D Hamiltonians to determine whether MZMs are present. The class that we are interested in is the collection of Hamiltonians with two-body and nearest-neighbour interactions that are \emph{frustration-free}. Let $L$ be the number of spins  in the system, then such a Hamiltonian is given by $H=\sum_{i=1}^{L-1} h_{i,i+1}$, and $H$ being frustration-free (FF) means that the ground state of $H$ is simultaneously a ground state of each dimer term $h_{i,i+1}$ for all $i = 1,\dots,L$. This is a non-trivial statement as the $h_{i,i+1}$ may not commute with one another, and eigenstates that are not the ground states may be frustrated. 
Frustration free (FF) systems form an important class of many-body local Hamiltonians. Examples of FF spin systems include the toric code \cite{Kitaev_toric} (actually the toric Hamiltonian is a sum of commuting terms) and the AKLT model \cite{AKLT}. They are also used in the study of tensor network states: given a matrix product state, it is possible to construct a parent Hamiltonian that is gapped and frustration free \cite{PEPS_inj}.

Frustration-free fermionic Hamiltonians have been studied in Ref.\ \cite{Katsura_etal} for the purposes of identifying phases with MZMs. In their work, they begin with a physical fermion Hamiltonian with the appropriate symmetries then demand that it be unfrustrated. They indeed find interacting Hamiltonians in topological phases supporting MZMs. Our work complements these results by approaching the task from a more abstract setting which allows a complete systematic deduction of all FF Hamiltonians, and we can conclusively say there are no additional FF phases for MZMs. 
A central ingredient in our work is Ref.\ \cite{Bravyi_Gosset} which fully characterises the ground space and spectral properties of all 1D FF spin Hamiltonians with two-body and nearest-neighbour interactions. We apply a Jordan-Wigner transformation to map the spin Hamiltonians to fermionic ones. The Jordan-Wigner transformation has proven to be a very useful tool for analyzing fermionic systems, especially those in one spatial dimension, and the connection between spin and fermion pictures in the Kitaev chain has been thoroughly explored  and extended to parafermions, the higher-dimensional analogues of MZMs, in Ref.~\cite{Fendley}.
By virtue of the Jordan-Wigner transformation, the spectral and frustration free properties carry over to the fermion picture, however, as the transformation is non-local, we are not guaranteed a local fermionic Hamiltonian. Nevertheless, a Hamiltonian for Majoranas must preserve fermionic parity (fermion number modulo 2), and imposing this symmetry results in local fermionic Hamiltonians. In this way we obtain a complete characterisation of all gapped FF fermionic Hamiltonians. From this, we may then identify the topologically-ordered systems. We find that they are Kitaev chains, which may be non-interacting or interacting, with nearest-neighbour interactions that are either attractive or repulsive. Figure \ref{FF_venn} helps visualise the set of Hamiltonians we are interested in.

A noteworthy observation that arises from analysing the FF spin Hamiltonians is a connection between them and the ``one-dimensional line" (ODL) of Peschel and Emery \cite{Peschel_Emery}. The ODL is a line in the phase diagram where the time evolution operator of a kinetic spin model can be related to the transfer matrix of certain Ising models. An example of an ODL occurs in the axial nearest-neighbour Ising (ANNNI) model \cite{Peschel_Emery,Baccaria_2006}. Along the ODL, the ANNNI model is dual to the FF Hamiltonian, Eq.\eqref{eq_H'}, that we find is capable of hosting MZMs in the fermion picture when its parameters are constrained in a certain way (the hopping amplitude $t$ is equal to the $p$-wave pairing gap $\Delta$).

The paper is divided in the following way. In Section \ref{sec_MZM_def} we present a more formal definition of MZMs, and in Section \ref{sec_FF_Hams} we deduce the frustration-free Hamiltonians hosting MZMs, given in Eq.\ \eqref{eq_H'} in the spin picture, and Eqs. \eqref{eq_MZM_Ham_ferm} and \eqref{eq_MZM_Ham_ferm_2} for fermions. During our analysis we encounter an FF Hamiltonian, Eq.\ \eqref{eq_H_case_iii}, that supports MZMs, however, they are not spatially separated. Hence, even though these fermionic modes satisfy most of the conditions for them to be MZMs, they are not topologically protected. We comment on how this difference manifests itself in the spin picture in Section \ref{sec_comp_spin_ferm}. Section \ref{sec_MPS} analyses the MZM Hamiltonian Eq.\  \eqref{eq_H'} and its ground space Eq.\ \eqref{eq_case_i_GL} using the language of matrix product states. In Section \ref{sec_broader} we briefly discuss FF Hamiltonians that arise from dropping certain symmetries, and we conclude our work in Section \ref{sec_conc}. Details of calculations are deferred to the Appendices.

\section{\label{sec_MZM_def}Majorana zero modes} 

Consider a Hamiltonian $H$ for a {chain} of (complex) fermions where the operators $c_j^\dag$ and $c_j$ create and annihilate, respectively, a fermion at site $j$. They obey the standard fermionic anti-commutation relations $\{c_j,c_k\}=0$, $\{c^\dag_j,c^\dag_k\}=0$, and $\{c_j,c^\dag_k\}=\delta_{jk}$. The term $n_j = c_j^\dag c_j$ is the number operator. Majorana operators can be defined as
\begin{align}
a_j = c_j + c_j^\dag, \quad b_j = -i(c_j - c_j^\dag),
\end{align}
so that they are Hermitian, and it is possible to rewrite the $H$ in terms of the $a_j, b_j$ with $j=1,\dots,L$. 

For some gapped, fermionic Hamiltonians (such as for superconducting nanowire systems), gapless edge modes called Majorana zero-energy edge modes (MZM), which are a special type of Majorana fermion, may be present. Their existence implies a degenerate ground space, which is protected by particle-hole symmetry. In the thermodynamic limit, the the gap between the ground states vanishes, whereas in finite systems the gap is (at most) exponentially small in the size of the chain. The MZMs cannot be gapped out by any local symmetry-preserving perturbations; they are a manifestation of topological order in the system.

An MZM is a fermionic operator $\gamma$ that satisfies the following conditions \cite{Fendley}:
\begin{enumerate}[(a)]
\item $\gamma^\dag = \gamma$,
\item $\{(-1)^F, \gamma\} = 0$, where $(-1)^F$ is the fermion parity operator and $F = \sum_{j=1}^L n_j$,
\item $[\gamma, H] = 0$,
\item ``normalisable'', that is, $\gamma^2=\I$ as $L \rightarrow \infty$,
\item $\gamma$ is localised near an edge.
\end{enumerate}
Condition (a) ensures that the operator corresponds to a Majorana (i.e. real) fermion, then (b) says that $\gamma$ maps between even parity and odd parity sectors, while (c) implies that the spectrum in these sectors are identical. Condition (d) ensures that the mode is normalisable in the thermodynamic limit, and finally (e) is necessary for MZMs to be topologically protected i.e. robust when subjected to local noise. However, in section \ref{sec_case_iii} we find a setting where conditions (a) through (e) are met, but the MZMs can be gapped by a local perturbation because they are not spatially separated (they are localised on the same end of the chain), so this condition needs to be strengthened.

MZMs always occur in pairs, since each is, in a sense, ``half a fermion'', and $2n$ MZMs implies the existence of a $2^n$-dimensional degenerate ground space. Note that condition (c) may be relaxed so that $\gamma$ does not exactly commute with the $H$ for all $L$, but only needs to as $L \rightarrow \infty$ \cite{alicea2012, SFN_2015}. For a single Kitaev chain, $n=1$ \cite{Kitaev}.

A symmetry-protected topologically ordered (SPTO) phase is a phase that occurs because of some special symmetry in the system. The Kitaev chain has three symmetries: (i) fermionic parity; (ii) particle-hole; (iii) time reversal. These symmetries are rather natural because: (i) parity is preserved in all fermionic systems; (ii) particle-hole symmetry arises naturally in superconducting systems (though it is crucial for protecting the ground space degeneracy); (iii) it turns out that time-reversal symmetry is not necessary for gapless boundary modes to exist \cite{Bernevig_lecs}.   Noting this, in the following we will not use the term SPTO, and instead use the phrases ``topological order'' or ``topological phase'' in reference to the non-trivial phase of the interacting Kitaev chain which hosts MZMs.

\section{\label{sec_FF_Hams}Frustration-free Hamiltonians}

We begin our analysis in the spin-1/2 (qubit) picture in order to connect with the work of Bravyi and Gosset in  \cite{Bravyi_Gosset}. We are concerned with a one-dimensional chain of $L \geq 2$ qubits, and its Hamiltonian is translationally-invariant with nearest-neighbour dimer interactions and open boundaries. Furthermore this Hamiltonian is frustration-free (FF). In other words, the Hamiltonian we focus on is given by
\begin{align}
H=\sum_{i=1}^{L-1} h_{i,i+1}.
\end{align}
Without loss of generality, we can set the ground state energy of each dimer term $h_{i,i+1}$ equal to zero. Then $H$ is FF if a ground state $\ket\Omega$ of $H$ is simultaneously a ground state of every dimer term, i.e., $h_{i,i+1}\ket\Omega = 0$ for all $i = 1,\dots L$. This also means that the ground state energy of $H$ is zero. 
Open boundary conditions are required so that the system has the potential to realise MZMs at the ends of the chain.

As $h:=h_{i,i+1}$ is an operator on $\mathbb{C}^2 \otimes \mathbb{C}^2$ with minimum eigenvalue zero, then $h\geq0$ and $\mathrm{rank}(h) = r$ with $r \in \{1,2,3\}$, hence its ground space is $(4-r)$-fold degenerate. Let its spectral decomposition be
\begin{align}
h= \sum_{j=1}^r \lambda_j \ket{e_j}\bra{e_j},
\end{align}
with eigenvalues $\lambda_1 \geq \lambda_2 \geq \lambda_3 \geq \lambda_4= 0$ (and $\lambda_1 > 0$), and corresponding eigenvectors $\ket{e_j}$. The number of non-zero $\lambda_j$ is equal to $r$. In a frustration-free system, the actual values of the strictly-positive $\lambda_j$ have no effect on the ground space of $H$ or whether it is gapped/gapless (see Appendix \ref{app_mu_gap}). Therefore, we could in principle set all $\lambda_j > 0$ to unity to simplify the task, as is done in \cite{Bravyi_Gosset} and, for instance, the quantum 2-SAT problem \cite{BOE_2010,Bravyi_2SAT}. However, for our purposes we do not impose this restriction, and, in fact, if we did, we would miss out all the interacting Hamiltonians!

We now proceed with our investigation into whether the chain Hamiltonian $H$ possesses Majorana edge modes. The requirement that $H$ be FF depends on the rank $r$ of the dimer term $h$, and, in most cases, the ground space is related to the (qubit permutation) symmetric subspace. We begin with the rank $r=1$ and $r=3$ cases because they are the simplest to handle. The rank 2 case is more involved and so we deal with it last.

\subsection{\label{sec_Rank_1}The dimer terms $h_{i,i+1}$ are rank-one operators}

If the rank of $h$ is 1 and $h = \ket{e_1}\bra{e_1}$ with $\ket{e_1}\in \mathbb{C}^2 \otimes \mathbb{C}^2$ entangled, that is $\ket{e_1}$ cannot be separated into a tensor product of single qubit states, then $H$ is FF and the ground space $G_L$ of $H$ is isomorphic to the symmetric subspace on $L$ qubits, which is $(L+1)$-dimensional \cite{Bravyi_Gosset,Arams_church}. The rule governing whether the Hamiltonian $H=\sum_{i=1}^{L-1} \ket{e_1}\bra{e_1}_{i,i+1}$ is gapped is the main result of \cite{Bravyi_Gosset}, and it applies to a special set of entangled states $\ket{e_1}$. Within this set, the family of fermionic parity conserving Hamiltonians has (up to constants and scaling) $h_{i,i+1} = \cos\theta(Z_i - Z_{i+1}) + \sin\theta(X_iX_{i+1} + Y_{i}Y_{i+1})-Z_i Z_{i+1}$, where $\theta \in (0,\pi/2)$ (see Appendix \ref{App_rank1} for details). Therefore any Hamiltonian in this family is a candidate for supporting MZMs. The problem however is in the degeneracy of the ground space $G_L$. An argument for why there are no MZMs goes as follows. A ground space with dimension $2^n$ is a necessary condition for the existence of $2n$ MZMs. Since $\mathrm{dim}(G_L) = L+1$, we see that the system may have $2\log_2(L+1)$ MZMs if $\log_2(L+1)$ is an integer. {However, the number of MZMs should not depend on the size of the chain, since this is not a topological property.} Furthermore, in the thermodynamic limit $L\rightarrow\infty$, if there were MZMs, there would be a macroscopic number of them. This is an unstable setting and one would not expect these MZMs to be protected against local perturbations since a large number of these zero-energy modes would be overlapping. We thus lose the exponential separation of modes and they would be easily gapped out by local operations.

\subsection{\label{sec_Rank_3}The dimer terms $h_{i,i+1}$ are rank-three operators}

If the rank of $h$ is 3, then $H$ is FF if and only if the ground state of $h$ is a product state of the form $\ket\psi^{\otimes 2}$. It follows that ground space of $H$ is the one-dimensional span of $\ket\psi^{\otimes L}$ \cite{Bravyi_Gosset}. 
A necessary condition for the presence of MZMs is a degenerate ground space, therefore there can be no MZMs in this case. 

\subsection{\label{sec_rank2}The dimer terms $h_{i,i+1}$ are rank-two operators}

In this case, the Hamiltonian $H$ we are considering has nearest-neighbour dimer terms
\begin{align}
h=\lambda_1\ket{e_1}\bra{e_1} + \lambda_2\ket{e_2}\bra{e_2}.
\end{align}
We also assume $h$ is not a product operator i.e. $h\neq  h_A \otimes h_B$. Theorem 3 of \cite{Bravyi_Gosset} states that imposing frustration-freeness on $H$ results in five independent cases. Two of these cases are irrelevant for our purposes (they are numbered 1 and 5 in Theorem 3 of \cite{Bravyi_Gosset}) as they correspond to the ground space being 1- and 0-dimensional, respectively. Such a system cannot host MZMs. We therefore turn our attention to the remaining three cases, which we summarise here in a form more suited to our needs.

Let $G_L$ be the ground space of FF Hamiltonian $H$ with $L$ sites, then $\dim(G_L) = 2$ and exactly one of the following holds:
\begin{enumerate}[(i)]
\item $G_L = \mathrm{span}\{\ket{\alpha}^{\otimes L},\ket{\beta}^{\otimes L}\}$  for some linearly independent normalised single-qubit states $\ket\alpha, \ket\beta$.
\item $G_L = \mathrm{span}\{\ket{\alpha \beta\alpha\beta...},\ket{\beta \alpha\beta\alpha...}\}$  for some linearly independent normalised single-qubit states $\ket\alpha, \ket\beta$.
\item $G_L = \mathrm{span}\{\ket{\alpha}^{\otimes L},\,\,\ket{\bar{\alpha} \alpha\alpha...\alpha}+f\ket{ \alpha\bar{\alpha}\alpha...\alpha}+f^2\ket{ \alpha\alpha\bar{\alpha}...\alpha}+...+f^{L-1}\ket{ \alpha\alpha\alpha...\bar{\alpha}}\}$  for some orthonormal single-qubit states $\ket\alpha, \ket{\bar{\alpha} }$, a non-zero $f\in \mathbb{C}$.
\end{enumerate}
$H$ is gapped in cases (i) and (ii), and it is gapped in case (iii) if $|f| \neq 1$. If $|f| = 1$ in case (iii) then the Hamiltonian is gapless, with spectral gap upper bounded by $(1- \cos(\pi/L))$.

As the ground space in cases (i) - (iii) above is two-fold degenerate for any length $L$, these systems have the potential for MZMs. We now explore each of these three cases in more detail.

\subsubsection{\label{sec_case_i}Case (i): $G_L = \mathrm{span}\{\ket{\alpha}^{\otimes L},\ket{\beta}^{\otimes L}\}$}

This ground space $G_L = \mathrm{span}\{\ket{\alpha}^{\otimes L},\ket{\beta}^{\otimes L}\}$ is a two-dimensional subspace of the $L$-qubit symmetric subspace. In Appendix \ref{app_rank2}, we work in the spin picture to deduce the translationally invariant FF Hamiltonians for which $G_L$ is the ground space. We then impose parity conservation in order to identify this with a fermionic Hamiltonian. In the spin picture, parity conservation is the requirement that $[H,Z^{\otimes L}] = 0$, where $Z= \ket 0 \bra 0 - \ket 1 \bra 1$ is the Pauli-$Z$ matrix. This splits our the Hamiltonians into two distinct types. Writing the dimer terms as
\begin{align}h_{i,i+1} = [A\ket\Psi\bra\Psi + B \ket\Phi\bra\Phi]_{i,i+1}, \quad A,B > 0,\end{align}
where $\ket\Psi$ and $\ket\Phi$ are orthogonal, the two types of Hamiltonian, up to symmetry-preserving unitaries of the form $U_1\otimes U_2 \otimes \dots \otimes U_L$, with $U_i \in$ SU(2), are:
\begin{itemize}
\item \emph{Type 1}
\begin{align}
\ket\Psi &= \frac{1}{\sqrt 2}(\ket{01}-\ket{10}),\\
\ket\Phi & = \cos \frac \omega 2\ket{00} +  \sin \frac \omega 2\ket{11},
\end{align}
where $\omega \in (0,\pi)$. The ground space states are
\begin{align}\label{alpha}
\ket{\alpha} &= \cos\frac \theta 2\ket 0 + i\sin\frac \theta 2\ket 1,\\
\label{beta}
\ket {\beta} &= \cos\frac \theta 2\ket 0 - i\sin\frac \theta 2\ket 1,
\end{align}
with $\theta \in (0,\pi)$ and the angles $\theta, \omega$ are related through
\begin{align}\cos \frac \omega 2 = \frac{\sin^2 \frac \theta 2}{\sqrt{\sin^4 \frac \theta 2+ \cos^4 \frac \theta 2}}
.\end{align}
\item \emph{Type 2}
\begin{align}\label{eq_Type2_Psi}
\ket\Psi &= \cos\frac\theta 2 \ket{01}+\sin\frac\gamma 2 \ket{10},\\
\label{eq_Type2_Phi}
\ket\Phi &=  \sin\frac\theta 2 \ket{01}- \cos\frac\gamma 2 \ket{10},
\end{align}
with $\theta \in (0,\pi)$.
The ground space states are $\ket\alpha = \ket 0$ and $\ket\beta = \ket 1$.
\end{itemize}
\bigskip

Consider first a Type 1 Hamiltonian. The ground space is
\begin{align}\label{eq_case_i_GL}
G_L = \left\{ \left(\cos\frac \theta 2\ket 0 + i\sin\frac \theta 2\ket 1\right)^{\otimes L}, \,\,  \left(\cos\frac \theta 2\ket 0 - i\sin\frac \theta 2\ket 1\right)^{\otimes L} \right\}.
\end{align}
The total Hamiltonian may be written as $H =\frac14[(L-1)(A+B)\I+ H'],$ where
\begin{align}\label{eq_H'}\nonumber
H' = \sum_{i=1}^{L-1}& B\cos\omega(Z_i + Z_{i+1})-(A-B\sin\omega)X_iX_{i+1}\\
&-(A+B\sin\omega)Y_iY_{i+1}-(A-B)Z_iZ_{i+1},
\end{align}
and $X = \ket 0 \bra 1 + \ket 1\bra 0$, $Y=-i\ket 0 \bra 1 + i \ket 1 \bra 0$ are the Pauli-$X$ and Pauli-$Y$ matrices respectively.
This is a Heisenberg $XYZ$ spin-chain with a local magnetic field.
As $H$ and $H'$ are related simply by a rescaling and constant shift, they have the same relevant physical properties. From now on we only consider $H'$.
Performing a Jordan-Wigner transformation (see Appendix \ref{app_JW}) on $H'$ gives the Hamiltonian for a superconducting Kitaev chain of interacting spinless (complex) fermions
\begin{align}\nonumber
H'= &\sum_{j=1}^{L-1}[-t(c^\dag_j c_{j+1}+c^\dag_{j+1}c_j)+\Delta(c_j c_{j+1}+c^\dag_{j+1}c^\dag_j)]\\ 
&-\frac12\sum_{j=1}^L\mu_j(2n_j-\I)+U\sum_{j=1}^{L-1}(2n_j-\I)(2n_{j+1}-\I),\label{eq_MZM_Ham_ferm}
\end{align}
which is time-reversal and particle-hole symmetric (although it does not conserve total fermion number).
The operators $c_j^\dag$ and $c_j$ create and annihilate, respectively, a fermion at site $j$, and they obey the standard fermionic anti-commutation relations. The term $n_j = c_j^\dag c_j$ is the number operator, $t$ is the hopping amplitude, $\Delta$ is the $p$-wave pairing gap, $\mu_j$ is the on-site chemical potential, and $U$ is strength of the nearest-neighbour interaction. These terms are related to the parameters in Eq.\ \eqref{eq_H'} in the following way:
\begin{align}\label{eq_MZM_Ham_ferm_2}
t=2A, \quad \Delta=-2B\sin\omega, \quad U=B-A,
\end{align}
and the boundary $\mu_j=2B\cos\omega$ for $j=1$ and $j=L$, which is half the bulk value $\mu_j=4B\cos\omega$ for $j=2,\dots,L-1$. Alternatively one can eliminate the spin Hamiltonian variables $A,B,\omega$ to obtain
\begin{align}\label{eq_MZM_Ham_ferm_3}
\mu = 4\sqrt{U^2+tU+\frac 14(t^2-\Delta^2)},
\end{align}
where $\mu$ is the bulk value. This expression is in agreement with Eq. (11) of Ref.\ \cite{Katsura_etal}. These relations ensure that $H'$ is frustration free. Given $A,B > 0$ and $\omega \in (0,\pi)$, this restricts $t>0$ and $\Delta < 0$. The sign of $\Delta$ is simply due to the local basis fixing in Eqs. \eqref{alpha}  and \eqref{beta}. Transforming every qubit by the unitary matrix $V = \ket{0}\bra{0} + i\ket{1}\bra{1}$ has the effect of sending $\omega \rightarrow -\omega$, and therefore changes the sign of $\Delta$ (only). In the fermion picture, the same transformation is achieved by $c_j \rightarrow ic_j$. As $U = B-A$, the fermion interactions can be either repulsive $U > 0$, non-interacting $U=0$, or attractive $U < 0$. It only depends on relative sizes of the eigenvalues $A,B$ of the dimer Hamiltonian $h_{i,i+1}$. 

The ground states of $H'$ in the fermion picture and their indistinguishability with respect to local measurements are discussed in Ref.\ \cite{Katsura_etal}. In the spin picture, the form of the ground space $G_L$ in Eq.\ \eqref{eq_case_i_GL} is not in an amenable form since the products states are not orthonormal and they are not of definite parity. Yet the ground space should split into even and odd parity sectors. In fact, this is easily achieved by taking appropriate linear combinations:
\begin{align}
\ket\alpha^{\otimes L} + \ket\beta^{\otimes L}, \quad \mathrm{and} \quad \ket\alpha^{\otimes L} - \ket\beta^{\otimes L}.
\end{align}
These vectors are now orthogonal and are parity eigenstates with eigenvalues $+1$ and $-1$ since $Z\ket\beta = \ket\alpha$.

So far our FF Hamiltonian $H'$  is gapped, conserves fermionic parity, and has a ground space degeneracy. However for Majorana zero modes to exist, it is also compulsory that $H'$ is in a topologically non-trivial phase. In \cite{Kitaev}, this phase is defined for a non-interacting Hamiltonian. We now demonstrate that our $H'$ is adiabatically connected to the non-interacting case by showing that the system remains gapped and the ground states do not change, hence they are in the same phase. The argument below follows the one presented in \cite{Katsura_etal}. We reproduce it here for completeness.

Define $s = (B-A)/2A > -\frac 12$, then the family of Hamiltonians $H'(s) = \sum_{i=1}^{L-1}h'(s)_{i,i+1}$, with 
\begin{align}
h'(s)_{i,i+1} &= -2A[(c^\dag_i c_{i+1}+c^\dag_{i+1}c_i)+(1+2s)\sin\omega(c_i c_{i+1}+c^\dag_{i+1}c^\dag_i)\\
&+(1+2s)\cos\omega(n_i+ n_{i+1}-\I) -s(2n_i-\I)(2n_{i+1}-\I)], \notag
\end{align} 
all have the same ground space $G_L$ given in Eq.\ \eqref{eq_case_i_GL}.
The parameter $s$ interpolates between interacting systems with $s \neq 0$ and the non-interacting case $s=0$, i.e. $A=B$. The special case $H'(s=0)$ corresponds to the Kitaev chain \cite{Kitaev} in the topologically-ordered regime: topological order is present in the system only when $2|t| > |\mu|$ (where $\mu$ corresponds to the bulk on-site chemical potential value $\mu=4B\cos\omega$) and $\Delta \neq 0$. In terms of our parameters, these two conditions are $1 > \cos\omega$ and $\sin\omega \neq 0$, which are clearly true for the full range of $\omega \in (0,\pi)$. Hence $H'(s=0)$ is topologically-ordered quite generally, that is, for all allowed $\omega$, and it is known that $H'(s=0)$ supports MZMs localised at the ends of the chain \cite{Kitaev,Katsura_etal}. Moreover, from \cite{Bravyi_Gosset} and Appendix \ref{app_mu_gap}, we know that the Hamiltonian $H'(s)$ remains gapped for all $s > -\frac 12$ (i.e. for all $A,B > 0$). Therefore, since the interacting system is adiabatically connected to the non-interacting one, they are in the same topologically non-trivial phase.

The condition that the gap does not close along the whole path $s$ may not be sufficient to ensure $H'(s)$ and $H'(0)$ are in the same phase. This is because interactions modify the topological classifications of fermionic Hamiltonians, and a Hamiltonian in the topological phase may be connected to a trivial one without the gap closing \cite{FidKit_1}. Without interactions, these Hamiltonians are indexed by an integer $k \in \mathbb{Z}$ (the ``topological invariant''); adding interactions modifies this to $k \in \mathbb{Z}$ modulo 8 \cite{FidKit_2,Ari_class}. 
The Kitaev chain $H'(0)$ is in the Altland-Zirnbauer symmetry class BDI \cite{AZ_class} because the Hamiltonian commutes with the time-reversal operator $\mathcal T$, with $\mathcal T^2 = 1$, and with the charge conjugation operator (due to particle-hole symmetry) $\mathcal P$, with $\mathcal P^2 = 1$. In the Kitaev chain, the topological invariant $k=1$ \cite{Bernevig_lecs}. To study interacting models, one considers $n$ parallel Kitaev chains, i.e. $2n$ MZMs $\gamma_i$, and then asks what kinds of symmetry-preserving interaction terms can gap out the MZMs and lead to a non-degenerate ground space. It turns out that 8 chains are needed, i.e. an interaction involving 8 MZMs, before such an interaction can arise (hence the modification of $\mathbb{Z}$ to $\mathbb{Z}$ modulo 8 mentioned above). For all $s$, our interacting Hamiltonian $H'(s)$ is still a single chain of fermions, which retains a gap and the same symmetries. Therefore there is no way of gapping out the MZMs on this chain, and so the ground space remains degenerate (in fact the ground space is independent of $s$). Hence, $H'(s)$ and $H'(0)$ are the same non-trivial phase.

Further evidence for topological order in $H'(s)$ is supplied in \cite{Katsura_etal} by identifying two fermionic operators $\gamma_L$ and $\gamma_R$ that satisfy conditions (a) to (e) in section \ref{sec_MZM_def} for $H'(0)$ and so are candidate MZMs for the non-interacting system. In addition they correspond to modes localised on opposite ends of the chain. However, $[\gamma_{L,R}, H'(s)] \neq 0$ for $s \neq 0$. Therefore, an analytical form for MZMs in the interacting case is yet to be found \cite{Katsura_etal}.
\bigskip\bigskip

Consider now a Type 2 Hamiltonian characterised by Eqs.~\eqref{eq_Type2_Psi} and \eqref{eq_Type2_Phi}. It is equal to $H = \frac 14[(L-1)(A+B)\I + H' ]$ where
\begin{align}\label{eq_case_i_Type2}
H'=\sum_{i=1}^{L-1} (A-B)[\cos \theta (Z_i-Z_{i+1}) +\sin\theta (X_iX_{i+1} + Y_iY_{i+1})]-(A+B)Z_iZ_{i+1}.
\end{align}
This is adiabatically connected to an Ising Hamiltonian (where $A=B$) with dimer terms $Z_i Z_{i+1}$. In the fermionic picture this has $t= \Delta = \mu = 0$, therefore this Hamiltonian is in the topologically trivial phase and there are no MZMs.

\subsubsection{\label{sec_case_ii}Case (ii): $G_L = \mathrm{span}\{\ket{\alpha \beta\alpha\beta...},\ket{\beta \alpha\beta\alpha...}\}$}

Given any two states $\ket\alpha$ and $\ket\beta$, it is always possible to find a unitary $U$ such that $U\ket\alpha = \ket\beta$ and $U^2 = \I$. This means that case (ii) is locally unitarily related to case (i), and the derivation of the case (i) Hamiltonian in Appendix \ref{app_rank2} can be easily modified to account for case (ii).

The topologically non-trivial case (i) Hamiltonian is of Type 1. Since $\ket\beta = Z\ket\alpha$ and $Z^2 = \I$, then case (ii) is related to case (i) by a local unitary transformation that is a Pauli-$Z$ on every even qubit $Z_{even}=\I \otimes Z \otimes \I \otimes Z \dots$ or odd qubit $Z_{odd}=Z \otimes \I \otimes Z \otimes \I \dots$. Let $\bar Z \in \{Z_{even},Z_{odd}\}$. Then the Hamiltonian of interest in case (ii) is $\bar Z  H' \bar Z $, where $H'$ is given by Eq.\ \eqref{eq_H'}. The Hamiltonian remains translationally invariant and preserves fermionic parity. In the fermion picture, $\bar Z$ has the effect of sending $t \rightarrow -t$ and $\Delta \rightarrow -\Delta$, and corresponds to the local unitary transformation $c_j \rightarrow (-1)^jc_j$.
(Note that if we only wanted to transform $t\rightarrow -t$ and change nothing else, this would correspond to a combination of the above unitaries: in the spin picture it is $V^{\otimes L}\bar Z $, where $V = \ket 0 \bra 0+ i \ket 1 \bra 1$, and in the fermion picture $c_j \rightarrow i(-1)^jc_j$.)
As case (i) and case (ii) are related by a local unitary that commutes with the parity operator, $\bar Z  H' \bar Z $ has all the same physical properties as $H'$. In other words, it has the required symmetries and is also in the topologically non-trivial phase and can host MZMs localised at opposite ends of the chain.

\subsubsection{\label{sec_case_iii}Case (iii): 
$G_L = \mathrm{span}\{\ket{\alpha}^{\otimes L},\,\,\,\ket{\bar{\alpha} \alpha\alpha...\alpha}+f\ket{ \alpha\bar{\alpha}\alpha...\alpha}+f^2\ket{ \alpha\alpha\bar{\alpha}...\alpha}+...+f^{L-1}\ket{ \alpha\alpha\alpha...\bar{\alpha}}\}$
}

Let $\ket\alpha,\ket{\bar \alpha}\in \mathbb{C}^2$ be a pair of orthonormal qubit states. In \cite{Bravyi_Gosset}, it is shown that the Hamiltonian
\begin{align}
H = \sum_{i=1}^{L-1} A\ket{\bar \alpha\bar \alpha}\bra{\bar \alpha\bar \alpha}_{i,i+1} + B\ket{\nu}\bra{\nu}_{i,i+1},\quad A,B > 0
\end{align}
with $\ket{\nu} =  (\ket{\alpha\bar \alpha} - f\ket{\bar \alpha\alpha})/\sqrt{1+|f|^2}$ and non-zero $f\in \mathbb{C}$ has the two-fold degenerate ground space $G_L= \mathrm{span}\{\ket{\alpha}^{\otimes L},\,\,\,\ket{\bar{\alpha} \alpha\alpha...\alpha}+f\ket{ \alpha\bar{\alpha}\alpha...\alpha}+f^2\ket{ \alpha\alpha\bar{\alpha}...\alpha}+...+f^{L-1}\ket{ \alpha\alpha\alpha...\bar{\alpha}}\}$. In fact, a more general FF Hamiltonian with this ground space is permissible, and that is one that is a sum of dimer terms $[U(A\ket{\bar \alpha\bar \alpha}\bra{\bar \alpha\bar \alpha} + B\ket{\nu}\bra{\nu})U^\dag]_{i,i+1}$ where $U$ rotates only in the subspace $\mathrm{span}\{\ket{\bar \alpha\bar \alpha},\ket\nu\}$.  The Hamiltonian is gapped when $|f|\neq 1$. 
Imposing parity conservation on $H$ fixes $\ket \alpha = \ket 0$, $\ket{\bar \alpha}=\ket 1$, $U=\I$, and it is possible to choose a local basis such that $f \in \mathbb{R}$.
Then the ground states of $H$ are $\ket{0}^{\otimes L}$ and
{
\begin{align}
\ket{1000 \dots 00} + f \ket{0100 \dots 00} +\dots 
+ f^{L-2} \ket{0000 \dots 10} + f^{L-1} \ket{0000 \dots 01},
\end{align}
}which is left unnormalised. Since $|f|\neq 1$, this ground state is not permutation symmetric.
The Hamiltonian is given by $H = \frac {1}4\left[(L-1)\left([1+f^2]^{-1}B+A \right)\I+H'\right],$ with $H' = \sum_{i=1}^{L-1}h'_{i,i+1}$ and
{
\begin{align}
h'_{i,i+1}=&-\left(A-\frac{B(1-f^2)}{1+f^2}\right)Z_i - \left(A+\frac{B(1-f^2)}{1+f^2}\right)Z_{i+1} \notag \\
&-\frac{2Bf}{1+f^2}(X_iX_{i+1}+Y_iY_{i+1}) 
+(A-B)Z_iZ_{i+1}.
\end{align}
}
The fermionic dimer Hamiltonian (see Appendix \ref{app_JW})  is
{
\begin{align} \nonumber
h'_{i,i+1}=&\left(A-\frac{B(1-f^2)}{1+f^2}\right)(2n_i-\I) 
+\left(A+\frac{B(1-f^2)}{1+f^2}\right)(2n_{i+1}-\I)\\ 
&-\frac{4Bf}{1+f^2}(c^\dag_ic_{i+1}+c^{\dag}_{i+1}c_i)
+(A-B)(2n_{i}-\I)(2n_{i+1}-\I).\label{eq_H_case_iii}
\end{align}
}  
Following the same procedure as in case (i), we introduce a variable $s \propto A-B$ such that $H'(s)$ is a one parameter family of Hamiltonians which is adiabatically connected to the non-interacting system $H'(s=0)$. As long as we fix $|f| \neq 1$ the system remains gapped and the ground space is constant as we vary $s$, so $H'(s)$ and  $H'(s=0)$ are in the same phase. Is the Hamiltonian $H'(s=0)$ topologically non-trivial? The two necessary conditions from \cite{Kitaev} for this are $\Delta\neq 0$ and $2t > |\mu|$. Immediately we see that $\Delta= 0$, which is already bad news. Furthermore $2t > |\mu|$ from \cite{Kitaev} turns out to be false. To see this, note that the bulk $\mu = 4A$, and with $t = \frac{4Bf}{1+f^2}$, the inequality leads to $(1-f)^2 < 0$ which is false for all real $f$. Therefore, $H'(s)$ is adiabatically connected to a non-interacting Hamiltonian in the trivial phase. Now, while the non-interacting Hamiltonian $H'(s=0)$ may not host MZMs, it does not immediately preclude the possibility that the interacting one may. If $H'(s=0)$ is in the trivial phase, then its topological invariant $k=0$ \cite{Bernevig_lecs}. When we switch on interactions, the topological invariant of the interacting system goes to $k = 0 \mod 8 = 0$ \cite{FidKit_1,Ari_class}, hence the interacting system is also trivial and so it does not support MZMs.

This model does, however,  possess zero modes due to the way it was constructed.  These are easiest to analyze in the non-interacting limit $A=B$. The Hamiltonian becomes quadratic in the fermionic operators $c_j, c^\dag_j$, and so we can express it as $H'(0)= \frac 12 Q^\dag W Q$ where $Q = (c_1,\dots,c_L)^T$. In this case, $W$ is a Hermitian tri-diagonal $L \times L$ matrix, and its null vector has the form $\ve u = (f, f^2, f^3, \dots,f^L)^T$. 
Hence one finds that the complex fermionic zero mode is  $\wt{c} = {\cal N} \sum_{j=1}^L f^j c_j$, where ${\cal N}$ is a normalization constant. This mode commutes with the Hamiltonian: $[\wt{c}, H]=0$.  When $|f|<1$ this corresponds to an edge mode localised on the left side of the chain while for $|f| >1$ the mode is localized on the right.  The two Majorana fermions, $\gamma_1 = \wt c + \wt c^\dag$ and  $\gamma_2 = -i(\wt c - \wt c^\dag)$, composing $\tilde{c}$ each satisfy the conditions set out in section \ref{sec_MZM_def} and therefore correspond to zero modes, however they are localised at the same end. Since these modes are not spatially separated, the Majoranas can be easily gapped by a local perturbation.

There is another way to interpret the topological triviality of the Hamiltonian in Eq. \eqref{eq_H_case_iii} by noticing that it preserves fermion number (not just fermion parity). Then the degenerate ground states are states of definite fermion number, 0 or 1 (and opposite parity). These are distinguishable by the local operator $\hat N  =\sum_{i=1}^{L} n_i$, and hence they do not enjoy the topological protection that Hamiltonians of case (i) do. The ground states of the case (i) Hamiltonians are superpositions of all number states of definite parity, and therefore suffer huge fluctuations in measurements of $\hat N$, and so effectively, when $L$ is large, one cannot distinguish the ground states since $\langle \hat N \rangle \approx \frac L2$ for both states. Notice that the rank 1 Hamiltonian of section \ref{sec_Rank_1}, and the case (i) type 2 Hamiltonian of equation Eq.~\eqref{eq_case_i_Type2}, also conserve fermion number and are topologically trivial. Nevertheless, because number conservation is a natural symmetry in several systems, proposals for constructing MZM model within number-conserving systems have been presented in Refs.~\cite{Lang_2015,Iemini_2015}. There the authors consider two coupled Kitaev chains and demonstrate topological properties by analytically diagonalising the full Hamiltonians, which are frustration-free.

\subsection{\label{sec_comp_spin_ferm}Comparison of the spin and fermion pictures}

There are a few observations to be made that arise from switching between the spin and the fermion pictures. Consider first the spin picture. A key difference between the cases (i) (or (ii)) and (iii) is that, in the former, cases (i) or (ii), the ground space is invariant if we close the chain. 
That is, making the Hamiltonian in Eq.\ \eqref{eq_H'} periodic by adding a coupling $h_{1,L}$ term between sites $1$ and $L$ does not affect the ground space $G_L$ Eq.\ \eqref{eq_case_i_GL}. In fact, for case (i) $G_L$ is invariant under the addition of a coupling $h_{i,j}$ between any two spins (not just nearest-neighbour). In a sense, case (i) is topologically-trivial in the spin picture. This is in stark contrast to the Majorana picture, where closing the chain results in a loss of MZMs since there is no Majorana operator $\gamma$ that commutes with 
$h_{1,L}$.  Therefore  topologically trivial spin systems may correspond to topologically non-trivial systems of fermions.

On the hand, let us now consider case (iii).   For the spin system, if we close the chain the ground space degeneracy is lost and the only remaining ground state is $\ket 0^{\otimes L}$. The spin ground space now is sensitive to the topology.   The ground state degeneracy of the fermion system is similarly removed by closing the fermion chain.

\section{\label{sec_MPS}Examining the MZM Hamiltonian through the lens of matrix product states}

In section \ref{sec_case_i} we derived the FF Hamiltonian, Eq.\ \eqref{eq_H'}, that supports Majorana zero modes. We now formulate the spin ground states of this Hamiltonian as matrix product states (MPSs). This was explored previously in Ref. \cite{Asoudeh_etal} but there the connection with frustration freeness and other properties was not so explicit. Here we highlight these observations, as well as discussing additional aspects like injectivity, and confirm that the system satisfies an area law.

The matrix product state (MPS) is a particular representation of a quantum state. The MPS for a state of $L$ qudits is
\begin{align}\label{eq_MPS}
\ket\psi = \sum_{i_1,\dots,i_L} \tr(A^{[i_1]}\dots A^{[i_L]})\ket{i_1\dots i_L},
\end{align}
where $i_k \in \{0,\dots,d-1\}$ for all $k = 1\dots L$. The $i_k$ label the physical qudits with dimension $d$ and the $A^{[i_k]}$ are $D_k \times D_{k+1}$ matrices where $D=\max_k D_k$ is the ``bond dimension''. For open boundary conditions $D_1=D_L=1$. Any quantum state can be written in MPS form for large enough $D$, however, the MPS is most useful when $D$ is constant in $L$ since this enables efficient computation of measurable quantities  \cite{MPS_open,Orus}. The MPS formalism is useful for approximating ground states of one dimensional quantum spin models. Both the AKLT \cite{AKLT} and Majumdar-Ghosh \cite{Maj_Ghosh} Hamiltonians have ground states that can be efficiently represented using MPSs. Matrix product states, and their generalisations to tensor networks, are a powerful resource in the study of many body systems. MPSs form the variational domain for the density matrix renormalisation group \cite{MPS_DMRG,MPS_open}, and provide an invaluable tool for analysing area laws \cite{Hastings}. Furthermore, for every MPS, there is a frustration-free ``parent Hamiltonian'' for which the MPS is the ground state \cite{PEPS_inj}. We study this in more detail below, after we recast our MZM Hamiltonian (in the spin picture) ground states as MPSs.

The spin state for which we seek an MPS is any vector in the ground space $G_L$ from Eq.\ \eqref{eq_case_i_GL} i.e.
\begin{align}
\ket\psi \in \mathrm{span}(\ket\alpha^{\otimes L},\ket\beta^{\otimes L}) = G_L,
\end{align}
with the local basis states chosen such that $\ket\alpha=\cos(\theta/2)\ket 0 + i\sin(\theta/2)\ket 1=Z\ket{\beta}$. 
The ground space $G_L$ is a subspace of the symmetric subspace $S_L$ on $L$ qubits, where $\mathrm{dim}(S_L)  =L+1$. All states in $S_L$ can be written as an MPS with diagonal matrices $A^{[i_k]}$ and bond dimension $D=L+1$ \cite{MPS_symm}. The linear growth of $D$ with system size may mean that the area law does not apply when the ground space of a Hamiltonian is $S_L$ (an example of such a FF Hamiltonian is when it is a sum of rank 1 projectors, see Section \ref{sec_Rank_1} above). Nevertheless, it has been shown that this does in fact satisfy an area law \cite{BOE_2010}. In any case, since $\mathrm{dim}(G_L)=2$, the size of the ground space $G_L$ of the MZM Hamiltonian is constant in $L$. Therefore, any state in $G_L \subseteq S_L$ can be represented an MPS with diagonal matrices $A^{[i_k]}$ with $D=2$, and this satisfies an area law \cite{Chubb_Flammia,Huang}. We show in Appendix \ref{app_MPS} that the bond matrices for the state $\ket{\psi} = u\ket\alpha^{\otimes L} + v\ket\beta^{\otimes L} \in G_L$, with the local basis choice $\ket\alpha = \cos(\theta/2)\ket0 + i\sin(\theta/2)\ket 1 = Z\ket\beta$, are
\begin{align}
W^{[0_k]} =  \cos(\theta/2) \I, \quad W^{[1_k]} =  i\sin(\theta/2) Z,
\end{align}
for $k=2,\dots,L-1$, and the boundary matrices are row and column vectors due to open boundary conditions:
$W^{[0_1]} = \cos(\theta/2)(u,v)$, $W^{[1_1]} = i\sin(\theta/2)(u,-v)$, $W^{[0_L]} = \cos(\theta/2)(1,1)^T$, and $W^{[1_L]} = i\sin(\theta/2)(1,-1)^T$.

The parent Hamiltonian $H$ of this MPS is the one we find in Eq.\ \eqref{eq_H'}, which can support MZMs. As its ground state space $G_L$ is degenerate, this MPS is \emph{non-injective}. Such an MPS  corresponds to systems with discrete symmetry breaking \cite{MPS_reps}. An arbitrary state $\ket{\psi} = u\ket\alpha^{\otimes L} + v\ket\beta^{\otimes L} \in G_L$ does not possess the symmetries of Majorana Hamiltonian, which are  fermionic parity conservation, and, additionally, invariance under time reversal. This can be seen from their action in the spin picture, which happens to coincide for the local basis choice: $Z^{\otimes L}(u\ket\alpha^{\otimes L} + v\ket\beta^{\otimes L}) = u\ket{\alpha^*}^{\otimes L} + v\ket{\beta^*}^{\otimes L}=u\ket{\beta}^{\otimes L} + v\ket{\alpha}^{\otimes L} \neq \ket{\psi}$.
If an MPS is, on the other hand, injective, then it is the unique ground state of a parent Hamiltonian, and, in 1D, it is known that this Hamiltonian is gapped \cite{PEPS_inj}. So our MZM Hamiltonian is an example of a non-injective but gapped system.

\section{\label{sec_broader}A broader class of frustration-free Hamiltonians}

In the preceding work, we have only considered FF Hamiltonians with certain symmetries, namely translational invariance and fermionic parity conservation. However, given a ground space $G_L$, there is a whole family of FF Hamiltonians that share this ground space $G_L$. This family is obtained by applying a unitary $U_i$ to each dimer term $h_{i,i+1}$ that only rotates in the range of $h_{i,i+1}$. That is, if $H = \sum_i h_{i,i+1}$ is a Hamiltonian with ground space $G_L$, then so is $H_{U} = \sum_i U_i h_{i,i+1} U_i^\dag$ as long as $\ker(h_{i,i+1}) = \ker(U_i h_{i,i+1} U_i^\dag)$ for all $i$. Notice that the unitary $U_i$ can be site-dependent, i.e. we can drop translation invariance. If we also relax parity conservation then in case (i) we can obtain spin dimer Hamiltonians like Eq.\ \eqref{eq_H'} with an antisymmetric Dzyaloshinskii-Moriya interaction $XZ-ZX$ \cite{Dzya,Moriya}. However, because the system is frustration-free, the coefficients in the dimer Hamiltonian are not independent and so the antisymmetric interaction always appears along with an additional local transverse magnetic field in the $X$ direction.  

\section{\label{sec_conc}Conclusion and future directions}

Employing known results about gapped frustration-free spin system \cite{Bravyi_Gosset}, we use a Jordan-Wigner transformation to deduce the full set of dimer frustration-free fermionic Hamiltonians that can support Majorana zero-energy edge modes. We find that interacting Hamiltonians arise quite generically (the interactions can be either attractive or repulsive), and that they are adiabatically connected to the non-interacting Kitaev chain \cite{Kitaev}, as previously observed in \cite{Katsura_etal} using different methods.
The MZM Hamiltonian in the spin picture corresponds to a Heisenberg $XYZ$ chain with a local magnetic field. We show that the ground states have an efficient MPS representation, and furthermore that the Hamiltonian is non-injective, gapped, and satisfies an area law.

Restricting the Hamiltonian to consist of dimer terms (two-body and nearest neighbour terms $h_{i,i+1}$) in the spin picture is initially an assumption. The resulting Hamiltonians in the fermion picture are of the same form because fermionic parity precludes anything other than fermionic dimer terms. Conversely, a dimer fermionic Hamiltonian only gives rise to dimer spin Hamiltonians. Thus within this setting, our classification of Majorana phases is exhaustive. We focussed on this setting because it allowed us to make conclusive statements, and because it is physically well-motivated as dimer Hamiltonians appear in many experimental settings. One could consider more general terms e.g. (i) three-local terms $h_{ijk}$, or (ii) two-local but not nearest neighbour. Regarding (i), we are not aware of theoretical results classifying the frustration free and gapped regimes for such models, thus this question would need to be addressed first, and it certainly forms an interesting future direction. As for (ii), there may be scope for deriving results for Hamiltonians with terms like $h_{i,i+2}$. The reason is that it turns out that our case (i) spin Hamiltonians (when $t=\Delta$, or equivalently $A = B \sin\omega$) are dual to the ``axial next-nearest neighbour Ising'' (ANNNI) Hamiltonians, and these have terms like $h_{i,i+2}$ \cite{Peschel_Emery}. Because of duality, the phases of our case (i) and the ANNNI match, therefore one could use our methods to analyse Majorana phases in the fermion picture of ANNNI. Also, this set of Hamiltonians happen to lie on the ``one-dimensional line'' of Peschel and Emery \cite{Peschel_Emery}.

The frustration-free requirement could make it difficult to realise our MZM Hamiltonians in an experiment due to the fine-tuning of parameters. Therefore, an analysis of the effects of perturbations needs to be made, and considerations for how the system changes as the Hamiltonian varies away from the FF manifold.

Nevertheless there is still much to be explored within the frustration-free set. Although the ground states of such Hamiltonians may be easy to determine, this does not hold for the excited states, which generally are frustrated. Characterising the whole spectrum would desirable for the purposes of perturbation theory, and also for finding an expression of the MZM for the interacting FF Hamiltonian.

Finally, one can ask about 1D FF Hamiltonians for parafermions, the higher dimensional analogue of MZMs, or for qudits. Already this has been analysed for the non-interacting chain in Ref.~\cite{Fendley}. Perhaps it is tractable also in the interacting case.

\begin{acknowledgments}
The authors would like to thank Ari Turner and Courtney Brell for useful discussions.
SJ is supported by an Imperial College London Junior Research Fellowship. 
RB is supported in part
by the European Union's Seventh Framework Programme for
research, technological development, and demonstration under
Grant No. PCIG-GA-2013-631002.
\end{acknowledgments}

\appendix

\section{\label{app_mu_gap}Gap properties and the spectrum of the dimer terms $h_{i,i+1}$}

\begin{lem}
Consider a system of $L$ spins, where each is associated with an $n$-dimensional Hilbert space $\mathbb C^n$, the total space is $\mathcal H = (\mathbb C^n)^{\otimes L}$. Let $H_L=\sum_{i=1}^{M_L} h^{(i)}$, with $h^{(i)} \neq 0$, be a $k$-local (i.e. each $h^{(i)}$ is bounded and acts non-trivially on at most $k$ spins), frustration-free (FF) Hamiltonian with ground space $G_L$. The upper bound of the sum $M_L$ is some integer that grows with $L$. Without loss of generality we can assume all $h^{(i)}$ to be positive semidefinite and assume lowest eigenvalue(s) of each $h^{(i)}$ to be zero. 
Each term $h^{(i)}$ acts on $l_i \leq k$ spins and has a spectral decomposition $h^{(i)} = \sum_{j=1}^{d_i} \mu^{(i)}_j \ket{e^{(i)}_j}\bra{e_j^{(i)}}$, where $\mu^{(i)}_j \geq 0$, and $d_i = n^{l_i}$. Let $\mathcal J_i \subset \mathcal N_i := \{1,\dots,d_i\}$ denote the set of all indices $j$ for which $\mu^{(i)}_j > 0$, i.e. $h^{(i)} = \sum_{j \in \mathcal J_i} \mu^{(i)}_j \ket{e^{(i)}_j}\bra{e^{(i)}_j}$, and by assumption $\mathcal J_i \neq \emptyset$ and $\mathcal J_i \neq \mathcal N_i$. Define $\wt{H}_L =\sum_{i=1}^{L} \Pi^{(i)}$, where $\Pi^{(i)} = \sum_{j \in \mathcal J_i }  \ket{e^{(i)}_j}\bra{e^{(i)}_j}$ is the projector onto the range of $h^{(i)}$. Let the ground space of $\wt H_L$ be $\wt G_L$. 
Then (a) $\wt H_L$ is FF if and only if $H_L$ is, and $G_L = \wt G_L$; and (b)  $\wt H_L$ is gapped in the thermodynamic limit $L \rightarrow \infty$ if and only if $H_L$ is.
\end{lem}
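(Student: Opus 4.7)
\medskip
\noindent\textbf{Proof plan.} The plan is to deduce both (a) and (b) from a single operator sandwich relating $H$ and $\wt H$ dimer by dimer. Since $h^{(i)}$ and $\Pi^{(i)}$ are diagonal in the basis $\{|e_j\rangle\}$ with support on exactly the same indices $\mathcal{J}$ (the rest being the common kernel), one has the elementary positive semidefinite inequality
\[
\mu_{\min}\,\Pi^{(i)} \;\leq\; h^{(i)} \;\leq\; \mu_{\max}\,\Pi^{(i)},
\]
where $\mu_{\min}:=\min_{j\in\mathcal{J}}\mu_j>0$ and $\mu_{\max}:=\max_{j\in\mathcal{J}}\mu_j$. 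Because the $\mu_j$ are the same for every site label $i$ (translational invariance is built into the statement), summing over $i$ produces the global sandwich
\[
\mu_{\min}\,\wt H \;\leq\; H \;\leq\; \mu_{\max}\,\wt H,
\]
which is the workhorse for everything that follows.

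For part (a), first note that $\ker(h^{(i)})=\ker(\Pi^{(i)})$: both are positive semidefinite operators with the same kernel, namely $\mathrm{span}\{|e_j\rangle:j\in\mathcal{N}\setminus\mathcal{J}\}$. A vector $|\psi\rangle$ lies in the ground space of a frustration-free $H$ iff $h^{(i)}|\psi\rangle=0$ for every $i$, which, by the kernel equality, holds iff $\Pi^{(i)}|\psi\rangle=0$ for every $i$, i.e.\ iff $|\psi\rangle$ is in the ground space of $\wt H$. This establishes $G=\wt G$, and in particular $G\neq\emptyset$ iff $\wt G\neq\emptyset$, so one Hamiltonian is FF iff the other is.

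For part (b), the sandwich inequality combined with $G=\wt G$ transfers spectral bounds on the orthogonal complement $G^\perp$ between the two operators: if $\wt H\geq \wt\Delta\, P_{G^\perp}$, then $H\geq\mu_{\min}\wt H\geq\mu_{\min}\wt\Delta\,P_{G^\perp}$, giving $\Delta(H)\geq\mu_{\min}\Delta(\wt H)$; and, symmetrically, $\Delta(\wt H)\geq\Delta(H)/\mu_{\max}$. Thus
\[
\mu_{\min}\,\Delta(\wt H) \;\leq\; \Delta(H) \;\leq\; \mu_{\max}\,\Delta(\wt H),
\]
and since $\mu_{\min},\mu_{\max}$ are $L$-independent positive constants, taking $L\to\infty$ shows that $\Delta(H)$ stays bounded away from zero iff $\Delta(\wt H)$ does. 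The only real subtlety, and the step I would be most careful about, is justifying that the ``gap'' here means the spectral gap measured on $G^\perp$ (so that the degeneracy of the ground space is handled correctly) and confirming that $\mu_{\min},\mu_{\max}$ really are site-independent under the implicit translational invariance of the setup; once those points are made, the argument is essentially the sandwich inequality applied twice.
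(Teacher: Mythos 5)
Your proposal is correct and follows essentially the same route as the paper: part (a) rests on $\ker(h^{(i)})=\ker(\Pi^{(i)})$ (which the paper establishes via the factorisation $h^{(i)}=A^{(i)}\Pi^{(i)}$ with $A^{(i)}$ invertible), and part (b) uses the same two-sided sandwich $s\,\Pi^{(i)}\leq h^{(i)}\leq \|h\|\,\Pi^{(i)}$ with $L$-independent constants, combined with the variational principle on $G^{\perp}$. The caveat you flag about the gap being measured on $G^{\perp}$ and the constants being $L$-independent is exactly how the paper handles it, so no gap remains.
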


\begin{proof}
(a) Define
\begin{align}
A^{(i)} = \sum_{j=1}^{d_i} \nu^{(i)}_j \ket{e^{(i)}_j}\bra{e^{(i)}_j},
\end{align}
where $\nu^{(i)}_j = \mu^{(i)}_j$ if $\mu^{(i)}_j > 0$, i.e. $j \in \mathcal J_i$, otherwise $\nu^{(i)}_j = 1$. Then $A^{(i)}$ is invertible and strictly positive, that is, $\bra \omega A^{(i)} \ket \omega > 0$ for all $\ket\omega \neq 0 \in \mathcal H$. Furthermore $h^{(i)} =A^{(i)} \Pi^{(i)}$ for all $i = 1,\dots M_L$. Since $H_L$ is FF, then
\begin{align}
h^{(i)}\ket\phi = A^{(i)}\Pi^{(i)}\ket\phi = 0
\end{align}
for any $\ket\phi \in G_L$ and all $i = 1,\dots M_L$. Now, since $A^{(i)}$ is a (strictly) positive definite operator, then this implies that
the only vector $\ket\omega$ that satisfies $A^{(i)} \ket\omega = 0$ is the null vector, hence we must have $\Pi^{(i)}\ket\phi = 0$ for all $i = 1,\dots L$. But this is precisely the condition for $\wt H_L$ to be FF. Since this holds for any $\ket\phi \in G_L$, we deduce that $G_L \subset \wt G_L$, where $\wt G_L$ is the ground space of $\wt H_L$.

For the converse, we proceed in an analogous manner, and establish that $\wt G_L \subset G_L$. Hence $G_L = \wt G_L$.

(b)  Given $L$, let $s_L$ be the smallest of the non-zero eigenvalue of all the $h^{(i)}$, and $t_L = \mathrm{max}_i \|h^{(i)}\| < \infty$ (where $\|\dots \|$ is the Schatten operator 1-norm) is the largest eigenvalue of all the $h^{(i)}$. Then since $t_L\Pi^{(i)}\geq h^{(i)}\geq s_L \Pi^{(i)}$ for all $i=1,\dots,M_L$, it follows that 
\begin{align}\label{expH_ineqs}
t_L\bra{\psi}\wt H_L\ket{\psi} \geq \bra{\psi}H_L\ket{\psi} \geq s_L\bra{\psi}\wt H_L\ket{\psi},
\end{align}
for all $\ket\psi \in \mathcal H$. Let $E_L$ ($\wt E_L$) be the minimum strictly positive eigenvalue of $H_L$ ($\wt H_L$) with eigenvector $\ket{ E_L}$ ($\ket{\wt E_L}$). Then $\ket {E_L} \in G_L^{\perp}$ and $\ket{\wt E_L} \in \wt G_L^{\perp}$, the orthogonal complements of $G_L$ and $\wt G_L$ respectively. However, note that, since $G_L=\wt G_L$ from (a), then $\wt G_L^{\perp}=G_L^{\perp}$.

First we establish $H_L$ gapped $\Rightarrow$ $\wt H_L$ is gapped. Assume $H_L$ is gapped.  By assumption $E_L$ remains strictly positive as $L \rightarrow \infty$. Then from the first inequality in Eq.\ \eqref{expH_ineqs}, with $\ket\psi = \ket{\wt E}_L$, we have
\begin{align}
t_L\bra{\wt E_L}\wt H_L\ket{\wt E_L} = t_L \wt E_L \geq \bra{\wt E_L}H_L\ket{\wt E_L}
\end{align}
but by definition, the right hand side is lower bounded by $ \bra{E_L}H_L\ket{ E_L} = E_L$.
Hence we have deduced that $t_L \wt E_L \geq E_L$. Since $t_L$ is bounded for all $L$, we must have that $\wt E_L$ remains strictly positive as $L \rightarrow \infty$, hence $\wt H_L$ is gapped if $H_L$ is.

Now we establish $\wt H_L$ gapped $\Rightarrow$ $H_L$ is gapped. Assume $\wt H_L$ is gapped.  By assumption $\wt E_L $ remains strictly positive as $L \rightarrow \infty$. Then from the first inequality in Eq.\ \eqref{expH_ineqs}, with $\ket\psi = \ket{ E_L}$, we have
\begin{align}
\bra{E_L} H\ket{ E_L} = E_L \geq s_L\bra{E_L}\wt H_L\ket{ E_L},
\end{align}
but by definition, the right hand side is lower bounded by $ s_L\bra{\wt E_L}\wt H_L\ket{\wt E_L} = s_L \wt E_L$.
Hence we have deduced that $ E_L \geq s_L \wt E_L$. Since $s_L > 0$ for all $L$, then the right hand side remains strictly positive as $L \rightarrow \infty$, hence $H_L$ is gapped if $\wt H_L$ is.

\end{proof}

\section{\label{App_rank1}Hamiltonian with rank 1 dimer terms}

Consider a general two-qubit state
\begin{align}
\ket\psi = a\ket{00} + b\ket{01} + c\ket{10} + d\ket{11},
\end{align}
and the Hamiltonian $H = \sum_{i=1}^{L-1} h_{i,i+1}$ with $h_{i,i+1} = \ket\psi\bra\psi_{i,i+1}$. To qualify as a valid Majorana zero mode Hamiltonian, it must be preserve fermionic parity. In the spin picture this condition is $[H,Z^{\otimes L}] = 0$, which is equivalent to $[h_{i,i+1},Z_i Z_{i+1}] = 0$ for all $i$, and this can only be satisfied if
\begin{align}
Z \otimes Z \ket\psi = e^{i\phi} \ket\psi, \quad \phi \in \mathbb{R}.
\end{align}
This is an eigenvalue equations for the parity operator $Z \otimes Z$. The eigenvalues of $Z \otimes Z$ are $+ 1$ and $-1$, and the corresponding eigenvectors are even and odd parity respectively. In qubit language, even (odd) parity states are linear combinations of computational basis states with an even (odd) number of 1s. Hence, for two qubits, the even parity sector is spanned by $\{\ket{00},\ket{11}\}$, and odd parity is spanned by $\{\ket{01},\ket{10}\}$.

Hence $\ket\psi$ is either $\ket{\psi_{+}} = a\ket{00} + d\ket{11}$ or $\ket{\psi_{-}} = b\ket{01} + c\ket{10}$. In Ref.\ \cite{Bravyi_Gosset} considers only entangled $\ket\psi$, and it is stated that $H$ is gapped if and only if the matrix
\begin{align}
T_\psi :=
\begin{pmatrix}
\,\,\,\,\braket {\psi}{01} & \,\,\,\,\braket {\psi}{11}\\
-\braket {\psi}{00} & -\braket {\psi}{10}
\end{pmatrix},
\end{align}
has eigenvalues $\lambda_1$ and $\lambda_2$ such that $|\lambda_1| \neq |\lambda_2|$.

The eigenvalues of $T_{\psi_{+}}$ are $\pm a^*d^*$, hence this is gapless.
The eigenvalues of $T_{\psi_{-}}$ are $b^*$ and $-c^*$, hence this is gapped as long as $|b| \neq |c|$. Let $b=\cos\frac \theta 2$ and $c=e^{i\omega}\sin\frac \theta 2$. The product of qubit unitaries $\mathcal S=\otimes_{k=1}^L S_k$, where $S_k = \ket 0\bra 0 + e^{i(k-1)\omega}\ket1\bra1$, commutes with the parity operator $Z^{\otimes L}$, preserves translational invariance and removes the phase $e^{i\omega}$. Hence we can instead consider $c = \sin\frac \theta 2$ without loss of generality. Also required by Ref.\ \cite{Bravyi_Gosset} is $\theta \in (0,\pi/2)$ so that $\ket\psi$ remains entangled. Then
\begin{align}
H &= H(\theta) = \sum_{i=1}^{L-1} \ket\psi\bra\psi_{i,i+1} \notag \\
&= \frac  14[ (L-1) \I + \cos\theta(Z_i - Z_{i+1}) + \sin\theta(X_iX_{i+1} + Y_{i}Y_{i+1})-Z_i Z_{i+1}].
\end{align}

\section{\label{app_rank2}Deriving the FF Hamiltonian for case (i) in section \ref{sec_rank2}}

Recall that $h \geq 0$ is a two-qubit, rank-2 operator which cannot be written as $ h_A \otimes h_B$, for some single qubit operators $h_A, h_B$. We adapt a paragraph from Ref. \cite{Bravyi_Gosset} which shows that the range of $h$ is spanned by two linearly independent states $\ket\psi, \ket\phi$ which are both entangled. Consider the product operator $h_A \otimes h_B$. 
This  is positive and rank-2 if and only if it is of the form $\ket\chi\bra\chi \otimes M$ or $M\otimes \ket\chi\bra\chi$, where $\ket\chi \in \mathbb{C}^2$ and $M$ is a positive definite operator (i.e. rank$(M)=2$). The range of $h_A \otimes h_B$ is then of the form $\mathrm{span}(\ket{\chi}\otimes \ket 0, \ket\chi \otimes \ket 1)$ or $\mathrm{span}(\ket 0 \otimes \ket{\chi}, \ket 1 \otimes \ket{\chi} )$. These are the only two-dimensional subspaces of $\mathbb C^2 \otimes \mathbb C^2$ that contain only product states. Moreover, the rank-2 positive operators with such ranges are always product operators. Since, by assumption, $h \neq h_A \otimes h_B$, then the range of $h$ is not of this form and so it must contain at least one entangled state. Call it $\ket\psi$. Let $\ket\nu$ also be in the range of $h$ but linearly independent from $\ket\psi$. Then $\ket\phi = \ket\psi + z\ket\nu$ with $z \in \mathbb{C}$ is also in the range of $h$, and we can always choose a $z$ such that $\ket\phi$ is entangled and it is linearly independent of $\ket\psi$.

To the state $\ket\psi$ we associate a $2\times 2$ matrix
\begin{align}
T_\psi =
\begin{pmatrix}
\,\,\,\,\braket {\psi}{01} & \,\,\,\,\braket {\psi}{11}\\
-\braket {\psi}{00} & -\braket {\psi}{10}
\end{pmatrix},
\end{align}
and similarly for $\ket\phi$.  Note that
\begin{align}\label{eq_Tpsi_sing}
\ket\psi = [\det(T_\psi)]^* (\I\otimes T^{-\dag}_\psi) \ket\xi,
\end{align}
where $\ket\xi = \ket{01} - \ket{10}$ is the (unnormalised) singlet state, and $T^{-\dag}_\psi := (T^\dag_\psi)^{-1} = (T_\psi^{-1})^\dag$.
Furthermore, matrix $T_\psi$ ($T_\phi$)  is invertible if and only if $\ket\psi$ ($\ket\phi$) is entangled, which it is by assumption.

Recall that the minimum energies of $h_{i,i+1}$ and $H$ are zero, so the ground space is equal to the null space.
In a frustration-free system, an $L$-qubit state is in the null space of $H$ if and only if it is in the null space of $\ket{\psi}\bra\psi_{i,i+1}$ and of $\ket{\phi}\bra\phi_{i,i+1}$ for all $i=1,...,L-1$. The form of the ground states can be presented in terms of the matrices $T_\psi$ and $T_\phi$. In \cite{Bravyi_Gosset}  it is shown that the cases (i)-(iii) in section \ref{sec_rank2} break down into conditions on the eigenvectors of $T^{-1}_\phi T_\psi$ and $T_\psi$.  Below we consider case (i) and work backwards from \cite{Bravyi_Gosset} in order to obtain the FF Hamiltonian whose ground space is $G_L = \mathrm{span}\{\ket{\alpha}^{\otimes L},\ket{\beta}^{\otimes L}\}$. 
Case (ii) is simply a local unitary rotation away from case (i) (see section \ref{sec_case_ii}), and for case (iii) the Hamiltonian is already given in \cite{Bravyi_Gosset}.

In case (i), the matrix $T^{-1}_\phi T_\psi$ has linearly independent eigenvectors $\{ \ket\alpha, \ket\beta\}$ and these are also eigenvectors of $T_\psi$. Hence they are also eigenvectors of $T_\phi$:
\begin{align}\label{T_al_bet}
T^{-1}_\phi T_\psi \ket\alpha \propto \ket\alpha \propto  T^{-1}_\phi  \ket\alpha \quad \Rightarrow T_\phi \ket\alpha \propto \ket\alpha,
\end{align}
and similarly for $\ket\beta$. Note that the assumption $\ket\psi$ and $\ket\phi$ are linearly independent means $T^{-1}_\phi T_\psi$ is not proportional to the identity.
Since $T_\psi$ and $T_\phi$ are $d \times d$  matrices (here $d=2$), and they have a common set of $d$ linearly independent eigenvectors, then these eigenvectors form a (non-orthogonal) basis for $\mathbb{C}^2$. It follows that $T_\psi$ and $T_\phi$ commute, and that they are simultaneously diagonalizable. Let $\{\ket 0, \ket 1\}$ be an orthonormal basis for the qubit space $\mathbb{C}^2$. Define $Q = \ket\alpha\bra0 + \ket\beta\bra 1$ as the matrix whose columns are the eigenvectors $\ket\alpha$ and $\ket\beta$, and $D_\psi$, $D_\phi$ are diagonal matrices of eigenvalues of $T_\psi$ and $T_\phi$ respectively. Then
\begin{align}
T_\psi = Q D_\psi Q^{-1},\\
T_\phi = Q D_\phi Q^{-1}.
\end{align}
From equation \eqref{eq_Tpsi_sing} the entangled vectors in the range of $h$ then are
\begin{align}
\ket\psi &= [\det(D_\psi)]^* \I\otimes (Q^{-\dag})(D^\dag_\psi)^{-1}Q^\dag \ket\xi,\\
\ket\phi &= [\det(D_\phi)]^* \I\otimes (Q^{-\dag})(D^\dag_\phi)^{-1}Q^\dag \ket\xi.
\label{phi_xi}
\end{align}
Using the identity
\begin{align}
Q^{-\dag} = \frac{YQ^*Y}{\det Q^*},
\end{align}
we find $Q^{-\dag} = [\det Q^*]^{-1}(\ket{\bar{\beta}}\bra 0 - \ket{\bar{\alpha}}\bra 1)$, where $\ket{\bar{\alpha}}$ and $\ket{\bar{\beta}}$ are the normalised states orthogonal to $\ket{\alpha}$ and $\ket\beta$ respectively. Let $D^\dag_\psi= u_0 \ket 0 \bra 0 + u_1 \ket 1 \bra 1$ and $D^\dag_\phi= v_0 \ket 0 \bra 0 + v_1 \ket 1 \bra 1$, then
\begin{align}
\ket\psi &= u_1\ket{\bar{\alpha}\bar{\beta}}-u_0\ket{\bar{\beta}\bar{\alpha}}, \\
\ket\phi &=v_1\ket{\bar{\alpha}\bar{\beta}}-v_0\ket{\bar{\beta}\bar{\alpha}}.
\end{align}
These two states are required to be linearly independent, i.e. the vectors of eigenvalues $(u_0,u_1)$ and $(v_0,v_1)$ must be linearly independent. Hence the FF Hamiltonian with ground space $G_L = \mathrm{span}\{\ket{\alpha}^{\otimes L},\ket{\beta}^{\otimes L}\}$ has rank-2 dimer terms $h_{i,i+1}$ with range equal to $\mathrm{span}(\ket\psi,\ket\phi)$. Any such operator can be written as
\begin{align}
h_{i,i+1} = [A \ket{\eta}\bra{\eta} + B\ket{\bar\eta}\bra{\bar\eta}]_{i,i+1},
\end{align}
where $A,B >0$ and $\{\ket{\eta}, \ket{\bar\eta}\}$ is an orthonormal basis for $\mathrm{span}(\ket\psi,\ket\phi)$.

{We can construct one orthonormal basis by choosing coefficients $(u_0,u_1)$ and $(v_0,v_1)$ such that $\ket\psi$ and $\ket\phi$ are orthonormal (the choice is the same for any $\ket\alpha$, $\ket\beta$ and so this procedure is independent of $G_L$).} Fixing $u_0 = u_1 =[\sqrt 2\det Q^*]^{-1}$, then $\ket\psi \rightarrow \ket{\Psi} :=  \frac{1}{\sqrt 2}(\ket{01}-\ket{10})$, the singlet state. 
Let $\ket{\Phi}$ denote the $\ket\phi$ with $(v_0,v_1)$ chosen so that $\braket{\Psi}{\phi} = 0$. This yields $v_1 = -v_0$. Therefore 
$\ket\phi \rightarrow \ket{\Phi} := \frac{1}{\sqrt{N}} (\ket{\bar{\alpha}\bar{\beta}}+\ket{\bar{\beta}\bar{\alpha}})$, where $N$ ensures  $\braket{\Phi}{\Phi} = 1$.

An arbitrary orthonormal basis $\{\ket{\eta}, \ket{\bar\eta}\}$ for $\mathrm{span}(\ket\psi,\ket\phi)$ can be achieved by applying a two-qubit unitary transformation $U$ to $\{\ket\Psi,\ket\Phi\}$ that only rotates in this two-dimensional subspace. Such a unitary $U$ is of the form
\begin{align} 
U = \exp[-i\theta \ve n \cdot \ve \sigma],
\end{align}
where $\ve n \in \mathbb{R}^3$ is a unit vector, $\theta \in \mathbb{R}$, and $\ve \sigma$ is the vector of Pauli-like matrices in the orthonormal basis  $\{\ket\Psi,\ket\Phi\}$, i.e.
\begin{align}\label{sig1}
\sigma_1  &= \ket\Psi\bra\Phi + \ket\Psi \bra\Phi,\\
\label{sig2}
\sigma_2  &= -i\ket\Psi\bra\Phi + i\ket\Psi \bra\Phi,\\
\label{sig3}
\sigma_3  &= \ket\Psi\bra\Psi - \ket\Phi \bra\Phi.
\end{align}
Hence
\begin{align}
h_{i,i+1} &= [U(A \ket{\Psi}\bra{\Psi} + B\ket{\Phi}\bra{\Phi})U^\dag]_{i,i+1},\\
\ket{\Psi} &=  \frac{1}{\sqrt 2}(\ket{01}-\ket{10}),\\
\ket{\Phi} &\propto \ket{\bar{\alpha}\bar{\beta}}+\ket{\bar{\beta}\bar{\alpha}},
\end{align}
is the most general dimer term with ground space $G_L = \{\ket\alpha^{\otimes L},\ket\beta^{\otimes L}\}$.

Now, in order for the Hamiltonian $H = \sum_i h_{i,i+1}$ to be a valid Majorana zero mode Hamiltonian, it must preserve fermionic parity. In the spin picture this condition is $[H,Z^{\otimes L}] = 0$, 
which is equivalent to $[h_{i,i+1},Z_i Z_{i+1}] = 0$ for all $i$, and this can only be satisfied if
\begin{align}\label{parity_Psi}
Z \otimes Z U\ket{\Psi} &= e^{i\theta_\Psi}U\ket{\Psi},\\
\label{parity_Phi}
Z \otimes Z U\ket{\Phi} &= e^{i\theta_\Phi}U\ket{\Phi},
\end{align}
where $\theta_\Psi,\theta_\Phi \in \mathbb{R}$, and we have dropped the site label $i$. These are eigenvalue equations for the parity operator $Z \otimes Z$. The eigenvalues of $Z \otimes Z$ are $+ 1$ and $-1$, and the corresponding eigenvectors are even and odd parity respectively. In qubit language, even (odd) parity states are linear combinations of computational basis states with an even (odd) number of 1s. Hence, for two qubits, the even parity sector is spanned by $\{\ket{00},\ket{11}\}$, and odd parity is spanned by $\{\ket{01},\ket{10}\}$.

The states $\ket{\Psi}$ and $\ket{\Phi}$ are respectively antisymmetric and symmetric under exchange of the two spins. Already the singlet $\ket{\Psi} = \frac{1}{\sqrt 2}(\ket{01}-\ket{10})$ has parity $-1$. The state $\ket{\Phi}$ is some state in the symmetric subspace, and it can always be written as a linear combination of definite parity states
\begin{align}
\ket{\Phi} = w_+\ket{\Phi_{+1}} + w_-\ket{\Phi_{-1}}, \quad |w_+|^2+|w_-|^2=1,
\end{align}
where
\begin{align}
\ket{\Phi_{+1}} = \cos\left(\frac\theta 2\right)\ket{00} + e^{i\omega}\sin\left(\frac\theta 2\right)\ket{11}, \quad \theta \in [0,\pi], \quad \omega \in [0,2\pi),
\end{align}
and
\begin{align}
\ket{\Phi_{-1}} = \frac{1}{\sqrt 2}(\ket{01} + \ket{10}).
\end{align} 
The unitary $U$ then produces linear combinations of $\ket{\Psi}$ and $\ket{\Phi}$. It is not difficult to see that there are only two categories of $\ket{\Psi}$ and $\ket{\Phi}$ that yield parity preserving dimer terms $h_{i,i+1}$. As $\ket{\Psi}$ has definite parity $-1$, the only way it can combine with $\ket{\Phi}$ to produce new states of definite parity is if $\ket{\Phi}=\ket{\Phi_{-1}}$. There is another independent solution, that is the pair $\ket{\Psi}$ and $\ket{\Phi}=\ket{\Phi_{+1}}$, and only trivial unitaries $U$ are allowed (identity and swaps). We summarise as: 
\bigskip

\emph{Category A}
\begin{align}
\ket{\Psi} &= \frac{1}{\sqrt 2}(\ket{01}-\ket{10}),\\
\ket{\Phi} &=\ket{\Phi_{-1}} = \frac{1}{\sqrt 2}(\ket{01} + \ket{10}),
\end{align}
with any unitary $U$ rotating in span$(\ket\Psi,\ket\Phi)$ allowed. Note that this is equivalent to defining
\begin{align}
\ket{\Psi} &= \ket{01},\\
\ket{\Phi} &=\ket{10}.
\end{align}
and allowing any unitary $U$ rotating in span$(\ket{01},\ket{10})$.
\bigskip

\emph{Category B}
\begin{align}
\ket{\Psi} &= \frac{1}{\sqrt 2}(\ket{01}-\ket{10}),\\
\ket{\Phi} &=\ket{\Phi_{+1}} = \cos\left(\frac\theta 2\right)\ket{00} + e^{i\omega}\sin\left(\frac\theta 2\right)\ket{11}, \quad \theta \in [0,\pi], \quad \omega \in [0,2\pi).
\end{align}
\bigskip

Note that we can always find a local basis such that the parity-conserving Hamiltonian is also real (and therefore time-reversal symmetric). To see why, consider first category A.
Vectors of the form
\begin{align}
U\ket\Psi = \cos\frac a 2 \ket{01} + e^{ib} \sin\frac a 2 \ket{10},\\
U\ket{\Phi_{-1}} = \sin\frac a 2 \ket{01} - e^{ib} \cos\frac a 2 \ket{10}
\end{align}
will appear in $h_{i,i+1}$. Applying the product of qubit unitaries $\mathcal S=\otimes_{k=1}^L S_k$, where $S_k = \ket 0\bra 0 + e^{i(k-1)b}\ket1\bra1$, commutes with the parity operator $Z^{\otimes L}$, preserves translational invariance of the Hamiltonian and removes the phase $e^{ib}$. Without loss of generality then, we can always consider $U$ to be real. 

Now consider Category B. The singlet $\ket\Psi$ has the property that $V \otimes V \ket\Psi = \det V\ket\Psi$ for any unitary $V \in U(2)$. Let $V =\ket 0 \bra 0 + e^{-i\omega/2}\ket 1 \bra 1,$ then applying $V \otimes V$ to the singlet $\ket\Psi$ and to $\ket{\Phi_{+1}}$ gets rid of the phase $e^{i\omega}$. Hence, since  $[V,Z]=0$, applying $V^{\otimes L}$ to $H$ results in a real and translationally invariant Hamiltonian that is parity conserving. 
For this reason, without loss of generality, we need only consider real $\ket{\Phi_{+1}} = \cos\left(\frac\theta 2\right)\ket{00} + \sin\left(\frac\theta 2\right)\ket{11}$.

Given these results, we now need to determine the ground space vectors $\ket\alpha$ and $\ket\beta$. Recall that $\ket{\Phi} = \frac{1}{\sqrt{N}} (\ket{\bar{\alpha}\bar{\beta}}+\ket{\bar{\beta}\bar{\alpha}})$.  We can parametrise the qubit states in the standard way:
\begin{align}
\ket{\alpha} &= \cos\frac u 2 \ket 0 + e^{iv}\sin\frac u 2\ket 1, \quad \ket{\bar\alpha} = \sin\frac u 2 \ket 0 - e^{iv}\cos\frac u 2\ket 1,\\
\ket{\beta} &= \cos\frac x 2 \ket 0 + e^{iy}\sin\frac x 2\ket 1, \quad \ket{\bar\beta} = \sin\frac x 2 \ket 0 - e^{iy}\cos\frac x 2\ket 1,
\end{align}
where $u,x \in [0,\pi]$ and $v,y \in [0,2\pi)$. Then
\begin{align}
\ket{\Phi} =  \frac{2}{\sqrt{N}} &(\sin \frac u 2 \sin \frac x 2 \ket{00} + e^{i(y+v)} \cos \frac u 2 \cos \frac x 2 \ket{11} \\
& -  \frac 1 {\sqrt 2} [e^{iy}\sin \frac u 2 \cos \frac x 2  + e^{iv}\sin \frac x 2 \cos \frac u 2]\ket{\Phi_{-1}}).
\end{align}

Category A: $\ket{\Phi} = \ket{\Phi_{-1}}$. There are two ways this can be achieved:
\begin{align}
(a) \quad\sin \frac u 2 &= \cos \frac x 2 = 0 \Rightarrow u = 0, \quad \mathrm{and}\quad x = \pi \\
(b) \quad\sin \frac x 2 &= \cos \frac u 2 = 0 \Rightarrow x = 0, \quad \mathrm{and}\quad u = \pi.
\end{align}
In case (a) we find $\ket\alpha = \ket 0$ and $\ket\beta =  e^{iv}\ket 1$, and in case (b) $\ket\alpha = e^{iy}\ket 1$ and $\ket\beta =  \ket 0$. Since global phases are irrelevant, these two cases give the same solution.

Category B: $\ket{\Phi} = \ket{\Phi_{+1}} \propto \cos\left(\frac\theta 2\right)\ket{00} + \sin\left(\frac\theta 2\right)\ket{11}$. This occurs when $y+v = 2\pi N$ and
\begin{align}
e^{iy}\sin \frac u 2 \cos \frac x 2  + e^{iv}\sin \frac x 2 \cos \frac u 2 = 0.
\end{align}
Rearranging yields
\begin{align}
\tan \frac u 2 e^{2iy} = - \tan \frac x 2.
\end{align}
Resolving into real and imaginary parts:
\begin{align}\label{real_tan}
\tan \frac u 2 \cos 2y &= - \tan \frac x 2,\\
\label{imag_tan}
\tan \frac u 2 \sin 2y &= 0.
\end{align}
The last equation has a solution with $u=0$ and $y \in [0,2\pi)$. This implies $x=0$, and we find $\ket\alpha = \ket\beta = \ket 0$, however this violates the requirement that $\ket\alpha$ and $\ket\beta$ are independent vectors. The only other solution to Eq.\ \eqref{imag_tan} is $y = n\pi/2$ for $n\in \{0,1,2,3\}$ since $y \in [0,2\pi)$.

If $n \in \{0,2\}$, then Eq.\ \eqref{real_tan} implies that $\frac u 2 = - \frac x 2 + m \pi \Rightarrow u = -x + 2m\pi$, where $m$ is an integer, but as $u,x \in [0,\pi]$, then $u=-x$. The vectors in this case are:
\begin{align}
\ket\alpha = \cos\frac u 2 \ket 0 + \sin\frac u 2\ket 1, \quad \ket{\beta} &= \cos\frac u 2 \ket 0 - \sin\frac u 2\ket 1, \quad \mathrm{for} \quad n=0,\\
\ket\alpha = \cos\frac u 2 \ket 0 - \sin\frac u 2\ket 1, \quad \ket{\beta} &= \cos\frac u 2 \ket 0 + \sin\frac u 2\ket 1, \quad \mathrm{for} \quad n=2,
\end{align}
so both these values of $n$ give the same solution. Since we must have independent $\ket\alpha$ and $\ket\beta$, this restricts $u \in (0,\pi)$. The state $\ket{\Phi_{+1}} \rightarrow\ket{\Phi^0_{+1}} = \frac{2}{\sqrt{N}} ( - \sin^2 \frac u 2\ket{00} +  \cos^2 \frac u 2 \ket{11})$, and the dimer Hamiltonian here is
\begin{align}
h^0_{i,i+1} = A\ket\Psi\bra\Psi + B\ket{\Phi^0_{+1}}\bra{\Phi^0_{+1}}.
\end{align}

If $n \in \{1,3\}$, then Eq.\ \eqref{real_tan} implies that $\frac u 2 = \frac x 2 + m \pi \Rightarrow u = x + 2m\pi$, where $m$ is an integer, but as $u,x \in [0,\pi]$, then $u=x$. The vectors in this case are:
\begin{align}
\ket\alpha = \cos\frac u 2 \ket 0 - i\sin\frac u 2\ket 1, \quad \ket{\beta} &= \cos\frac u 2 \ket 0 + i\sin\frac u 2\ket 1, \quad \mathrm{for} \quad n=1,\\
\ket\alpha = \cos\frac u 2 \ket 0 + i\sin\frac u 2\ket 1, \quad \ket{\beta} &= \cos\frac u 2 \ket 0 - i\sin\frac u 2\ket 1, \quad \mathrm{for} \quad n=3,
\end{align}
so both these values of $n$ give the same solution. Since we must have independent $\ket\alpha$ and $\ket\beta$, this restricts $u \in (0,\pi)$. The state $\ket{\Phi_{+1}}\rightarrow\ket{\Phi^1_{+1}}= \frac{2}{\sqrt{N}} ( \sin^2 \frac u 2\ket{00} +  \cos^2 \frac u 2 \ket{11})$, and the dimer Hamiltonian here is
\begin{align}h^1_{i,i+1} = A\ket\Psi\bra\Psi + B\ket{\Phi^1_{+1}}\bra{\Phi^1_{+1}}.\end{align}
Notice that $\ket{\Phi^1_{+1}} = S \otimes S \ket{\Phi^0_{+1}}$, where $S = i\ket 0 \bra 0 + \ket 1 \bra 1$, and since $\ket\Psi$ is the singlet, we find
\begin{align}h^1_{i,i+1} = S\otimes S h^0_{i,i+1} S^\dag\otimes S^\dag.\end{align}
So since the Hamiltonian with these dimer terms are related by $S^{\otimes L}$ and $[S,Z]=0$, the topological properties of their resulting Hamiltonians will be the same.

Hence we may summarise as follows. Up to a product of qubit unitaries $\mathcal U = U_1\otimes U_2 \otimes \dots \otimes U_L$ that commutes with parity $Z^{\otimes L}$ and preserves translation invariance, the parity symmetric $L$-qubit FF Hamiltonian $H = \sum_i h_{i,i+1}$ with a two-dimensional ground space $G_L = \mathrm{span}\{\ket{\alpha}^{\otimes L},\ket{\beta}^{\otimes L}\}$ has dimer terms
\begin{align} h_{i,i+1} = [U (A\ket\Psi\bra\Psi + B \ket\Phi\bra\Phi)U^\dag]_{i,i+1}, \quad A,B > 0,\end{align}
that split up into two Types:
\begin{itemize}
\item \emph{Type 1}
\begin{align}
\ket\Psi &= \frac{1}{\sqrt 2}(\ket{01}-\ket{10}),\\
\ket\Phi & = \frac{1}{\sqrt{N}} ( \sin^2 \frac u 2\ket{00} +  \cos^2 \frac u 2 \ket{11}), \quad N = \sqrt{\sin^4 \frac u 2+ \cos^4 \frac u 2}
\end{align}
where $u \in (0,\pi)$ and $U =\I$. The ground space states are $\ket\alpha = \cos\frac u 2 \ket 0 + i\sin\frac u 2\ket 1$ and $\ket\beta =\cos\frac u 2 \ket 0 - i\sin\frac u 2\ket 1 = \ket{\alpha^*} = Z\ket\alpha$.

\item \emph{Type 2}
\begin{align}
\ket\Psi &= \cos\frac\gamma 2\ket{01}+ \sin\frac\gamma 2\ket{10},\\
\ket\Phi &=\sin\frac\gamma 2\ket{01}-\cos\frac\gamma 2\ket{10},
\end{align}
with $\gamma \in (0,\pi)$.
The ground space states are $\ket\alpha = \ket 0$ and $\ket\beta = \ket 1$.
\end{itemize}

\section{\label{app_JW}The Jordan-Wigner transformation}

Here we discuss the conversion of a qubit Hamiltonian into a spinless fermion Hamiltonian using a Jordan Wigner transformation. Let $c^\dag_j$ and $c_j$ be the (spinless) fermion creation and annihilation operators, respectively, at site $j$ with $j = 1,\dots,L$. They obey the standard anticommutation relations: $\{c_j,c_k\}=0$,  $\{c^\dag_j,c^\dag_k\}=0$ and  $\{c^\dag_j,c_k\}=\delta_{jk}$. Then the Jordan-Wigner transformation between Pauli spin operators and fermion operators is
\begin{align}
X_j &= \left[\bigotimes_{k=1}^{j-1}Z_k\right] (c^\dag_j+c_j ),\\
Y_j &=i \left[\bigotimes_{k=1}^{j-1}Z_k\right] (c^\dag_j-c_j),\\
Z_j &= \I - 2n_j,
\end{align}
where $n_j = c^\dag_jc_j$ is the fermionic number operator.

In order to have the correct symmetries, the spin Hamiltonians that can support Majorana zero edge modes will only contain terms like $X_j  X_{j+1}, Y_{j} Y_{j+1}, Z_j  Z_{j+1}$ and 1-local $Z_j$. In terms of fermionic operators, the $Z$ terms are straightforward, and we find
\begin{align}
X_j  X_{j+1} &= c_j^\dag c_{j+1}+c_{j+1}^\dag c_j- c_jc_{j+1}-c^\dag_{j+1}c_j^\dag ,\\
Y_j  Y_{j+1} &= c_j^\dag c_{j+1}+c_{j+1}^\dag c_j+c_jc_{j+1}+c^\dag_{j+1}c_j^\dag .
\end{align}
Inserting these expressions into the qubit Hamiltonians immediately yields the results in the main text.

\section{\label{app_MPS}MPS form of the states in the case (i) ground space $G_L$}

The matrix product state formulation of an $L$-qudit state $\ket\psi$ is
\begin{align}
\ket\psi=\sum_{i_1,\dots,i_L} \tr(A^{[i_1]}\dots A^{[i_L]})\ket{i_1\dots i_L},
\end{align}
where $i_k \in \{0,\dots,d-1\}$ for all $k = 1\dots L$ and the $A^{[i_k]}$ are $D \times D$ matrices. Then
\begin{align}
\notag &\bigotimes_{k=1}^{L} F^{[k]} \ket\psi=\sum_{i_1,\dots,i_L} \tr(A^{[i_1]}\dots A^{[i_L]})F^{[1]}\ket{i_1}\dots F^{[L]}\ket{i_L}\\
\notag &=\sum_{j_1,\dots,j_L}\sum_{i_1,\dots,i_L} \tr(A^{[i_1]}\dots A^{[i_L]})f_{j_1i_1}^{[1]}\ket{j_1}\dots f_{j_Li_L}^{[L]}\ket{j_L}\\
&=\sum_{j_1,\dots,j_L}\tr(B^{[j_1]}\dots B^{[j_L]})\ket{j_1\dots j_L}
\end{align}
where \begin{align}F^{[k]} = \sum_{j_k l_k} f_{j_kl_k}^{[k]} \ket{j_k}\bra{l_k},\end{align} and 
\begin{align}B^{[j_k]} = \sum_{i_k}f_{j_ki_k}^{[k]}A^{[i_k]}, \end{align} for $k = 1,\dots, L$.

The MPS matrices for product state $\ket 0^{\otimes L}$ are $C^{[i_k]}=\delta_{i0}$, and for $\ket 1^{\otimes L}$ are $\bar C^{[i_k]}=\delta_{i1}$, for any $k=1,\dots,L$. Therefore $\ket\alpha^{\otimes L} = (Q\ket 0)^{\otimes L}$ has MPS matrices $A^{[i_k]}=q_{i0}$, while $\ket\beta^{\otimes L} = (Q\ket 1)^{\otimes L}$ has $ B^{[i_k]}=q_{i1}$, where $q_{ij}$ are the entries of the matrix $Q$.

The MPS form of a superposition can be achieved with block-diagonal matrices: 
\begin{align}
u\ket{\psi} + v\ket{\phi}&=v\sum_{i_1,\dots,i_L} \tr(A^{[i_1]}\dots A^{[i_L]})\ket{i_1\dots i_L}\\
&+v\sum_{i_1,\dots,i_L} \tr(B^{[i_1]}\dots B^{[i_L]})\ket{i_1\dots i_L}\\
&= \sum_{i_1,\dots,i_L} \tr(W^{[i_1]}\dots W^{[i_L]})\ket{i_1\dots i_L}
\end{align}
where
\begin{align}
W^{[i_1]} = \begin{pmatrix}
uA^{[i_1]} & 0 \\
0 & vB^{[i_1]}
\end{pmatrix},
\end{align}
and
\begin{align}
W^{[i_k]} = \begin{pmatrix}
A^{[i_k]} & 0 \\
0 & B^{[i_k]}
\end{pmatrix},
\end{align}
for all $k = 2,\dots,L$. Of course there are many choices for where to absorb the coefficients $u$ and $v$, here we attach them to the first spin. This construction holds for MPS with periodic boundary conditions, however, we are interested in open chains (so that MZMs can exist at the ends), and this requires that $\dim(W^{[i_1]}) = \dim(W^{[i_L]})= 1$, in other words, $W^{[i_1]} = (uA^{[i_1]},vB^{[i_1]})$ and $ W^{[i_L]} = (A^{[i_L]},B^{[i_L]})^T$ are row and column vectors respectively (in which case the trace over all the $W$s is redundant).

\bibliography{references}

\end{document}